\documentclass[11pt,a4paper]{article}

\usepackage[T1]{fontenc}
\usepackage[utf8]{inputenc}
\usepackage{lmodern}
\usepackage[margin=2.5cm]{geometry}
\usepackage{amsmath,amssymb,amsthm}
\usepackage{booktabs}
\usepackage{graphicx}
\usepackage{array}
\usepackage{microtype}
\usepackage[numbers,sort&compress]{natbib}
\usepackage[dvipsnames]{xcolor}
\usepackage[colorlinks=true,linkcolor=blue!50!black,citecolor=blue!50!black,urlcolor=blue!50!black]{hyperref}

\newtheorem{proposition}{Proposition}
\newtheorem{definition}{Definition}
\theoremstyle{remark}

\newcommand{\CoD}{\mathrm{CoD}}

\title{Cost of Delay for Post-Quantum Migration:\\
Putting Classical and Harvest-Now-Decrypt-Later Risk\\on One Ordered List}
\author{Animesh Shaw\\[2pt]\small Independent researcher\\\small \texttt{animesh15b@iimk.edu.in}}
\date{Manuscript draft, October 2026}

\begin{document}
\maketitle

\begin{abstract}
Organisations that must decide which assets to migrate to post-quantum cryptography first, and how that work competes with a backlog of classical security findings, have no common unit in which to compare the two. Existing post-quantum scoring models are dimensionless and quantum-only, while classical risk is expressed as annualised loss. We express both threats as a \emph{cost of delay} in currency per year on the same asset. The classical term is annualised loss expectancy. The quantum term is the rate at which deferring migration commits irreversible harvest-now-decrypt-later (HNDL) loss, computed from the required confidentiality duration, the share of traffic an adversary can record, and data turnover, using an arrival law for a cryptographically relevant quantum computer (CRQC) fitted in closed form to published expert-survey anchors. The two terms are coupled by a competing-risks factor: data that a classical attacker is likely to take first leaves less for a quantum adversary to add. We prove twelve propositions about the construction, including a dominance threshold $\lambda^*$ for the classical hazard below which the quantum term exceeds the classical term, independent of asset value; that the construction is not a weighted sum of a classical and a quantum score; and that ranking by cost of delay is optimal for sequencing work under constant loss rates. We verify the implementation against a Monte Carlo simulation, adaptive quadrature and an independent JavaScript implementation. We apply the method to four public systems whose parameters are assumed from public documentation, not elicited from practitioners: a national identity platform (MOSIP), a core-banking platform (Apache Fineract), a medical-record system (OpenMRS) and a demonstration web shop. The quantum term reorders the asset list by more than input uncertainty does for one of the four systems (signal-to-noise 1.36 against 0.14--0.46); it moves specific long-lived, quiet, recordable assets decisively (registration packets in the identity platform enter the funded top two in 55\% of 4,000 perturbed registers against 0\% under classical-only ranking) and changes only magnitudes for medium-lifetime data. Variance decomposition shows that asset value and annual turnover explain 73--91\% of the uncertainty in the quantum term while the CRQC arrival date explains 3--17\%. We report where each ordering changes under the arrival date and the discount rate rather than a single calibrated score. The case studies are illustrative; no outcome data validate the scores.
\end{abstract}

\noindent\textbf{Keywords:} post-quantum cryptography; harvest now decrypt later; risk prioritisation; cost of delay; competing risks; migration planning; sensitivity analysis.

\section{Introduction}

A cryptographically relevant quantum computer (CRQC) would break the public-key schemes that establish most of today's secure sessions. Because an adversary can record ciphertext now and decrypt it once such a machine exists, data with a long confidentiality requirement is exposed from the moment it is transmitted, not from the moment the machine arrives \citep{mosca2018}. NIST has standardised a post-quantum key-establishment mechanism \citep{fips203} and has proposed deprecating quantum-vulnerable algorithms after 2030 and disallowing them after 2035 \citep{nistir8547}.

Few organisations can migrate everything at once. They hold a backlog of classical security findings ranked by annualised loss, and a separate post-quantum inventory ranked by a dimensionless score. A decision-maker then has no principled way to say whether migrating a given asset is more or less urgent than closing a classical finding on it, or on another asset.

This paper addresses that gap. The idea is simple: for each asset, ask what each further year of delay costs, separately for the classical threat and for the quantum threat, in the same unit (currency per year), and add the two after accounting for the fact that they compete for the same data.

\paragraph{Contributions.}
\begin{enumerate}
\item A \textbf{cost-of-delay formulation} in which classical and HNDL risk on the same asset are both expressed in currency per year, with no weighting or normalisation constant (Section~\ref{sec:model}).
\item A \textbf{competing-risks coupling} between the quantum term and the same asset's classical disclosure hazard, with a proof that it removes double counting (Proposition~\ref{prop:race}) and that the result is not a weighted sum of a classical and a quantum score (Proposition~\ref{prop:notsum}).
\item A \textbf{dominance threshold}: a single number per asset, independent of its monetary value, below which the quantum term dominates (Proposition~\ref{prop:dominance}).
\item A \textbf{scheduling justification}: ranking by cost of delay is optimal for sequencing work under constant loss rates, and the exact price of keeping classical and quantum work in separate queues is given (Propositions~\ref{prop:smith} and~\ref{prop:price}).
\item A \textbf{verification and statistical analysis} on four public systems, including input-uncertainty propagation, a signal-to-noise comparison, variance-based sensitivity indices, and an explicit report of where each ordering changes (Sections~\ref{sec:verification}--\ref{sec:stats}).
\end{enumerate}

\paragraph{What we do not claim.} We do not claim to be the first to score assets by post-quantum risk, to rank them, to gate on protection class or harvestability, or to observe that deferral under interception carries an accruing cost. These are prior work (Section~\ref{sec:related}). The propositions are standard probability and scheduling arguments applied to a new setting; the contribution is the construction and its verification, not new mathematics. The parameters of the case studies are assumed from public documentation. They are not elicited and not validated against outcomes.

\section{Related work}\label{sec:related}

\paragraph{Timing models.} Mosca's inequality \citep{mosca2018} frames migration as a feasibility constraint: with $X$ the years data must remain confidential, $Y$ the years migration takes and $Z$ the years to a CRQC, action is required when $X+Y>Z$. It is a binary rule on a point estimate. Baradziej \citep{baradziej2026} compares four model families for migration timing (Mosca, expected value, real options and risk prioritisation) and shows that their divergent recommendations trace to three choices: how CRQC arrival is represented, the discount rate, and whether interception harm is treated as present-tense. He observes that under interception deferral leaks, rather than preserves, the payoff. His article is a conceptual synthesis and proposes no per-asset model. Scholten et al.\ \citep{scholten2024} assess the benefits and risks of quantum computers more broadly. Recent engineering estimates lower the resources needed to factor a 2048-bit RSA modulus \citep{gidney2021,gidney2025}, and Blanco-Romero et al.\ argue that retaining intercepted traffic is economically cheap, so access to the channel, not storage, limits harvesting \citep{blanco2026}.

\paragraph{Per-asset scoring.} The Quantum-Adjusted Risk Score (QARS) of Grigaliūnas and Brūzgienė \citep{grigaliunas2025} is a weighted sum of a timeline term derived from Mosca's ratio, a sensitivity term and an exposure term gated by cryptographic vulnerability and harvestability, scored on $[0,1]$ per asset. It is quantum-only and dimensionless. The authors state the aim of embedding it in FAIR or NIST~800-30 registers ``alongside classical threats rather than siloed'' but give no method. Its 2026 sequel \citep{krusniauskas2026} makes the framework layer-specific, scoring data in transit, in use and at rest separately, adds an attenuation for symmetric protection bounded by Grover's algorithm, and reports an aggregate and a maximum-layer score on $[0,1]$; it remains dimensionless and quantum-only. Ishchukova et al.\ \citep{ishchukova2026} multiply CVSS severity by a quantum factor and an HNDL score to triage 78{,}587 CVEs, which ranks vulnerabilities, not assets, and is again dimensionless. (We read both 2026 papers at abstract level only: the publisher blocks automated retrieval.) CARAF \citep{ma2021} scores migration risk as a product of timeline and cost factors. Erlemann et al.\ attribute post-quantum exposure over an asset knowledge graph \citep{erlemann2026}. QUASAR \citep{weinberg2025} and the threat-modelling work of Baseri et al.\ \citep{baseri2024,baseri2025} address quantum risk in architecture and STRIDE-style assessment; the survey of Baseri et al.\ \citep{baseri2025}, which we read in full (version 4), combines STRIDE with NIST SP~800-30 in a qualitative Low/Medium/High matrix and assesses the pre-transition, hybrid and post-quantum phases separately, with no common unit. Two patents describe scoring quantum risk per enterprise asset from metadata \citep{us12519809} and valuing a networked device from the quantum safety of its ciphers \citep{us12010210}.

\paragraph{Classical leg and cost of delay.} Annualised loss expectancy is standard in quantitative risk analysis \citep{freund2015,hubbard2016} and the economics of security investment is developed by Gordon and Loeb \citep{gordon2002}. Cost of delay is established in product-development economics \citep{reinertsen2009}, and sequencing by weighted completion time is classical scheduling theory \citep{smith1956,pinedo2012}. Competing risks are textbook \citep{prentice1978}; our contribution is their use to couple two security threats on one asset.

\paragraph{Positioning.} In the sources we examined, no work places classical and post-quantum confidentiality risk on one asset in a common currency-rate unit and couples the quantum term to that asset's classical hazard. This is a statement about what we found. Our search was a literature and web search, not a systematic review, and we did not search patent classes exhaustively.

\paragraph{Related work by the author.} Earlier work by the author addresses the \emph{inputs} to this prioritisation rather than the prioritisation itself: LLM-assisted static analysis that finds quantum-vulnerable cryptography in code and scores it \citep{shaw2026a}, and a hybrid-by-default cryptographic library that implements the migration \citep{shaw2026b}. Where an asset's protection class, supplied here by hand, would in practice come from such an inventory, that earlier work is the natural source. A third paper concerns manifest and infrastructure-as-code analysis \citep{shaw2026c} and a fourth the validation of an evaluation harness \citep{shaw2026d}; the latter informs the verification discipline used here (Section~\ref{sec:verification}) but is otherwise unrelated.

\section{Model}\label{sec:model}

\subsection{Notation and assumptions}

Let $V$ be the loss (USD) if one retention period of an asset's data is disclosed, $\lambda$ the classical disclosure hazard (per year), $L$ the confidentiality lifetime (years), $h\in[0,1]$ the share of the data flow that a passive adversary can record, $\tau$ the annual turnover (new value per year divided by value at risk), $r=\tau V$ the value that newly flows each year (USD per year) and $\rho\ge0$ a discount rate; write $\kappa=\lambda+\rho$.

\begin{itemize}
\item[\textbf{A1}] The CRQC arrival time $T_q$ has a continuous distribution on $(0,\infty)$ with CDF $F$, density $f$, $F(0)=0$ and $F(t)>0$ for $t>0$. We use a Weibull law.
\item[\textbf{A2}] Data created at time $0$ is disclosed by a classical channel at time $T_c\sim\mathrm{Exp}(\lambda)$.
\item[\textbf{A3}] $T_c$ and $T_q$ are independent.
\item[\textbf{A4}] Data is sensitive during $[0,L]$ and worthless afterwards; disclosure at $t\le L$ by any channel loses its full value.
\item[\textbf{A5}] A fraction $h$ of the new value crosses a channel that can be recorded; value that cannot be recorded is not exposed to quantum loss.
\end{itemize}

\begin{definition}[Cost of delay]
\[
\CoD_c=\lambda V,\qquad
\CoD_q=r\,h\,I(\kappa,L),\qquad
I(\kappa,L)=\int_0^L f(t)\,e^{-\kappa t}\,dt,\qquad
\CoD=\CoD_c+\CoD_q .
\]
The quantum term is set to exactly zero when the key exchange is already quantum-safe, when $h=0$, or when $L=0$ (the gates).
\end{definition}

$I(\lambda,L)$ is the probability that a CRQC exists within the data's life \emph{and} the classical hazard has not already disclosed it (Proposition~\ref{prop:race}); $r\,h\,I$ is the rate at which each further year of delay adds irreversible quantum loss.

\subsection{The arrival law}

We take $F(t)=1-\exp\{-(t/\alpha)^\beta\}$ and fit $(\alpha,\beta)$ in closed form to two anchor points from the 2025 Quantum Threat Timeline Report \citep{moscapiani2025}, in which 26 experts gave the probability of a CRQC within 10 and 15 years as 28--49\% and 51--70\%. The upper ends define an optimistic scenario ($\alpha=13.18$, $\beta=1.433$, median 10.2 years) and the lower ends a pessimistic one ($\alpha=17.90$, $\beta=1.912$, median 14.8 years). Values beyond 15 years are extrapolation. The Weibull is an interpolation of two points, not a claim about the arrival process.

\subsection{Propositions}

Proofs are short and given in full in the repository (\texttt{methodology/FORMAL\_METHOD.md}); each proposition is also checked by an automated test on random continuous inputs.

\begin{proposition}[Units and scale invariance]\label{prop:units}
Both legs are in USD per year. Rescaling all losses by $c>0$ multiplies every $\CoD$ by $c$, so the ranking is invariant to a common rescaling of money.
\end{proposition}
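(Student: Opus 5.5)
The plan is a dimensional analysis of each leg followed by a homogeneity argument. The key observation is that the factor $I(\kappa,L)$ is dimensionless and does not depend on money.

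First I check units term by term.
\begin{itemize}
\item For the classical leg, $\lambda$ has units year$^{-1}$ and $V$ is in USD, so $\CoD_c=\lambda V$ is USD per year.
\item For the quantum leg, $\tau$ is a ratio per year, so $r=\tau V$ is USD per year, and $h\in[0,1]$ is dimensionless.
\item In $I(\kappa,L)=\int_0^L f(t)e^{-\kappa t}\,dt$, the density $f$ has units year$^{-1}$ and $dt$ has units years, so $f(t)\,dt$ is a probability. Since $\kappa t$ is dimensionless, $e^{-\kappa t}$ is a pure number.
\end{itemize}
Hence $I$ is dimensionless; indeed $0\le I\le F(L)\le 1$ by A1. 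So $\CoD_q=rhI$ is in USD per year, and the sum $\CoD$ is well typed.

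Second, I rescale all monetary quantities: $V\mapsto cV$, and hence $r=\tau V\mapsto cr$, since $\tau$ is a ratio of money to money per year and is unchanged. The remaining parameters $\lambda,\rho,L,h,\tau$ and the arrival law $F$ (parameters $\alpha,\beta$) carry no monetary units, so they are untouched. Therefore $\kappa$ and $I(\kappa,L)$ are unchanged, and each leg is linear in its money variable: $\CoD_c\mapsto c\,\CoD_c$ and $\CoD_q\mapsto c\,\CoD_q$.

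The gates (quantum-safe key exchange, $h=0$, $L=0$) depend only on non-monetary inputs, so they fire identically before and after rescaling, and $0\mapsto c\cdot 0$ is consistent. Thus $\CoD\mapsto c\,\CoD$ for every asset.

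Third, because $c>0$, multiplication by $c$ is strictly increasing on $\mathbb{R}$. So for any two assets $a,b$ we have $\CoD(a)\le\CoD(b)$ if and only if $c\,\CoD(a)\le c\,\CoD(b)$. The induced ordering, including ties, is therefore unchanged.

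The only potential obstacle is making precise which inputs count as ``losses''. In particular, $r$ must be treated as derived from $V$ through the dimensionless $\tau$; alternatively, if $r$ is supplied directly, it must be rescaled along with $V$. Otherwise the claim would fail. I would state this convention explicitly before the argument.
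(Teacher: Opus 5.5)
Your proposal is correct and follows the paper's own argument: both legs are linear in $V$ (with $r=\tau V$), and $\lambda,\tau,h$ and $I$ carry no money, so every $\CoD$ scales by $c$ and the order is preserved. Your explicit checks on units, on the gates, and on strict monotonicity of multiplication by $c>0$ spell out what the paper's one-line proof leaves implicit.
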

\emph{Proof.} Both $\CoD_c=\lambda V$ and $\CoD_q=\tau VhI$ are linear in $V$ and $I,h,\tau,\lambda$ do not involve money. \hfill$\square$

\begin{proposition}[The race; no double counting]\label{prop:race}
Let $E_q=\{T_q\le L,\,T_q<T_c\}$ and $E_c=\{T_c\le L,\,T_c<T_q\}$. Then $\Pr(E_q)=I(\lambda,L)$; $\Pr(E_c)=\int_0^L\lambda e^{-\lambda t}(1-F(t))\,dt$; the events are disjoint with
\[
\Pr(E_c)+\Pr(E_q)=1-e^{-\lambda L}\bigl(1-F(L)\bigr),
\]
and the naive sum $\Pr(T_c\le L)+\Pr(T_q\le L)$ over-counts the union by exactly $(1-e^{-\lambda L})F(L)$.
\end{proposition}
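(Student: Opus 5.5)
The plan is to compute each probability by conditioning on one of the two arrival times, using independence (A3) and continuity (A1). The union identity then follows from a complement computation, and the over-count follows from inclusion--exclusion.

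\textbf{Computing $\Pr(E_q)$ and $\Pr(E_c)$.} For $E_q$, I condition on $T_q=t$. Since $T_q$ has density $f$ and $T_c$ is independent and exponential,
\[
\Pr(E_q)=\int_0^L f(t)\Pr(T_c>t)\,dt=\int_0^L f(t)e^{-\lambda t}\,dt=I(\lambda,L).
\]
For $E_c$, I condition instead on $T_c=t$, with density $\lambda e^{-\lambda t}$. This gives $\Pr(E_c)=\int_0^L\lambda e^{-\lambda t}\Pr(T_q>t)\,dt$. By continuity of $F$, $\Pr(T_q>t)=1-F(t)$, which is the stated integral.

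\textbf{Disjointness.} $E_q$ requires $T_q<T_c$ and $E_c$ requires $T_c<T_q$, so they cannot occur together.

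\textbf{The union identity.} I identify $E_c\cup E_q$ with $\{\min(T_c,T_q)\le L\}$, up to the tie event $\{T_c=T_q\}$. This event has probability zero because both laws are continuous and independent; concretely, $\Pr(T_c=T_q)=\int f(t)\Pr(T_c=t)\,dt=0$.

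Indeed, if $\min(T_c,T_q)\le L$ and there is no tie, the smaller of the two is at most $L$ and strictly beats the other. That is exactly $E_q$ or $E_c$. Conversely, each of $E_q$ and $E_c$ forces the minimum to be at most $L$.

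Hence
\[
\Pr(E_c)+\Pr(E_q)=1-\Pr(T_c>L)\Pr(T_q>L)=1-e^{-\lambda L}\bigl(1-F(L)\bigr).
\]

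As a cross-check not needed for the proof, integration by parts on $\int_0^L \lambda e^{-\lambda t}(1-F(t))\,dt$ gives
\[
1-e^{-\lambda L}(1-F(L))-\int_0^L f e^{-\lambda t}\,dt,
\]
which confirms the sum directly.

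\textbf{The over-count.} The naive sum is $\Pr(T_c\le L)+\Pr(T_q\le L)$. By inclusion--exclusion, its excess over $\Pr(\min(T_c,T_q)\le L)$ is $\Pr(T_c\le L,\,T_q\le L)$. By independence this equals $(1-e^{-\lambda L})F(L)$.

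This can be checked algebraically. We have
\[
(1-e^{-\lambda L})+F(L)-\bigl[1-e^{-\lambda L}+e^{-\lambda L}F(L)\bigr]=F(L)(1-e^{-\lambda L}).
\]

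\textbf{Main obstacle.} The only delicate point is the measure-zero tie $T_c=T_q$ in the union identity, which needs the continuity in A1 and the independence in A3. The remaining steps are routine conditioning.
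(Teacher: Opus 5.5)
Your proof is correct and follows the paper's argument: condition on each clock and integrate using independence, identify the union with $\{\min(T_c,T_q)\le L\}$ up to a null tie event, take the complement, and read off the over-count as the intersection probability $\Pr(T_c\le L)\Pr(T_q\le L)$. Your handling of disjointness is slightly cleaner than the paper's: the strict inequalities make the events disjoint surely, and ties matter only for the union identity.
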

\emph{Proof.} By independence $\Pr(T_q\in dt,\,T_c>t)=f(t)e^{-\lambda t}dt$; integrate. Likewise for $E_c$. Ties have probability zero, so the events are disjoint, and their union is $\{\min(T_c,T_q)\le L\}$ whose complement is $\{T_c>L,T_q>L\}$. The over-count is the intersection probability $\Pr(T_c\le L)\Pr(T_q\le L)$. \hfill$\square$

\begin{proposition}[Integration by parts]\label{prop:ibp}
For $\kappa\ge0$, $I(\kappa,L)=e^{-\kappa L}F(L)+\kappa\int_0^L F(t)e^{-\kappa t}\,dt$.
\end{proposition}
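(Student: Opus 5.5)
The plan is to integrate by parts in the defining integral $I(\kappa,L)=\int_0^L f(t)e^{-\kappa t}\,dt$. Under A1, $F$ is a continuous CDF with density $f$, so $F$ is absolutely continuous with $F'=f$ almost everywhere. The factor $e^{-\kappa t}$ is smooth on $[0,L]$. I therefore take $u(t)=e^{-\kappa t}$ and $dv=f(t)\,dt=dF(t)$, and apply integration by parts for the product of two absolutely continuous functions on the compact interval $[0,L]$.

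The computation has three steps, in order.
\begin{enumerate}
\item Write $\int_0^L e^{-\kappa t}\,dF(t)=\bigl[e^{-\kappa t}F(t)\bigr]_0^L-\int_0^L F(t)\,\frac{d}{dt}e^{-\kappa t}\,dt$.
\item Evaluate the boundary term using $F(0)=0$ from A1. This leaves only $e^{-\kappa L}F(L)$.
\item Substitute $\frac{d}{dt}e^{-\kappa t}=-\kappa e^{-\kappa t}$. The minus signs cancel, giving $+\kappa\int_0^L F(t)e^{-\kappa t}\,dt$.
\end{enumerate}
Together these give the stated identity. The case $\kappa=0$ is consistent with this: the right-hand side reduces to $F(L)$, which equals $\int_0^L f$ directly. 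So no separate argument is needed, though I would note it as a sanity check.

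The only step needing care is justifying integration by parts when $f$ is just a density, not necessarily continuous. I would handle this by invoking the Lebesgue–Stieltjes version for absolutely continuous functions, or equivalently by an exchange of integrals. One writes $e^{-\kappa t}=e^{-\kappa L}+\int_t^L\kappa e^{-\kappa s}\,ds$, substitutes this into $I$, and applies Tonelli, which is valid because all integrands are nonnegative when $\kappa\ge0$. Swapping the order of integration over the region $0\le t\le s\le L$ gives $\kappa\int_0^L e^{-\kappa s}F(s)\,ds$ directly. I would present this Fubini route as the main argument because it needs no regularity beyond A1, and mention the classical integration by parts as the intuition.

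Finally, I would note a remark rather than a further step. Since $\kappa\ge0$ and $0\le F\le1$, both terms on the right-hand side are nonnegative. This form is what later results can use for monotonicity in $L$ and $\kappa$.
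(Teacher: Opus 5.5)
Your proposal is correct and follows the paper's proof, which is Stieltjes integration by parts with the boundary term at $0$ removed by $F(0)=0$. The Tonelli exchange over the region $0\le t\le s\le L$ that you offer as the main argument is the standard justification of that same step, and it needs nothing beyond the density assumption in A1.
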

\emph{Proof.} Stieltjes integration by parts with $F(0)=0$. \hfill$\square$

For $1<\beta<2$ the Weibull density behaves like $t^{\beta-1}$ at the origin, which is not smooth; Simpson's rule on $fe^{-\kappa t}$ converges at order about $1.4$ (measured) whereas the form above, whose integrand behaves like $t^\beta$, converges at about $2.4$, with worst relative error $4.5\times10^{-10}$ against $9.4\times10^{-6}$.

\begin{proposition}[Bounds and coupling factor]\label{prop:bounds}
$e^{-\kappa L}F(L)\le I(\kappa,L)\le F(L)$. The coupling factor $I/F(L)$ lies in $[e^{-\kappa L},1]$ and equals $1$ only if $\kappa=0$.
\end{proposition}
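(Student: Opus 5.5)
The plan is to bound the integrand of $I(\kappa,L)=\int_0^L f(t)e^{-\kappa t}\,dt$ pointwise and then integrate, using only that $f\ge0$ integrates to $F(L)$ over $[0,L]$ (by A1, $F(0)=0$). For $t\in[0,L]$ and $\kappa\ge0$ we have $e^{-\kappa L}\le e^{-\kappa t}\le 1$. Multiplying by $f(t)\ge0$ and integrating gives
\[
e^{-\kappa L}F(L)\;\le\; I(\kappa,L)\;\le\; F(L).
\]
Alternatively, the lower bound can be read off Proposition~\ref{prop:ibp}, since the second term $\kappa\int_0^L F(t)e^{-\kappa t}\,dt$ is nonnegative.

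Since $L>0$ in the ungated case, A1 gives $F(L)>0$. Dividing by $F(L)$ then places the coupling factor $I/F(L)$ in $[e^{-\kappa L},1]$. If $\kappa=0$, the integrand is just $f$, so $I=F(L)$ and the factor equals $1$.

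The only step needing care is the converse: if $\kappa>0$ then $I<F(L)$ strictly. We have
\[
F(L)-I=\int_0^L f(t)\bigl(1-e^{-\kappa t}\bigr)\,dt,
\]
and the integrand is nonnegative, so it suffices to show it is positive on a set of positive measure. I would argue by contradiction. If the integral were zero, then $f(t)(1-e^{-\kappa t})=0$ almost everywhere on $[0,L]$. Because $1-e^{-\kappa t}>0$ for every $t>0$, this forces $f=0$ a.e.\ on $(0,L]$, hence $F(L)=\int_0^L f=0$, contradicting A1.

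A cleaner route that avoids any assumption on $f$ beyond A1 is to use Proposition~\ref{prop:ibp}:
\[
F(L)-I=F(L)\bigl(1-e^{-\kappa L}\bigr)-\kappa\int_0^L F(t)e^{-\kappa t}\,dt=\kappa\int_0^L \bigl(F(L)-F(t)\bigr)e^{-\kappa t}\,dt.
\]
Here the integrand is nonnegative and continuous, and it is strictly positive near $t=0$ because $F(t)\to0<F(L)$. So the difference is strictly positive whenever $\kappa>0$. I expect this strictness step to be the only real obstacle, and the integration-by-parts form settles it without any extra regularity.
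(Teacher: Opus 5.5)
Your proof is correct and uses the natural argument, which is presumably what the paper intends: the paper states this proposition without an in-text proof and defers to the repository. The argument integrates the pointwise bound $e^{-\kappa L}\le e^{-\kappa t}\le 1$ against $f$ on $[0,L]$, divides by $F(L)>0$, and gets strictness for $\kappa>0$ from A1. Either of your strictness arguments suffices, and the first also needs nothing beyond A1, since a nonnegative integrand with zero integral vanishes almost everywhere.
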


\begin{proposition}[Comparative statics]\label{prop:statics}
$\partial I/\partial\kappa=-\int_0^Ltf(t)e^{-\kappa t}dt<0$; $\partial I/\partial L=f(L)e^{-\kappa L}>0$ where $f(L)>0$; if $F_1\ge F_2$ pointwise then $I_1\ge I_2$; $\CoD_q$ is linear and increasing in $r$ and $h$.
\end{proposition}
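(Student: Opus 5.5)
We prove the four claims of Proposition~\ref{prop:statics} in turn, working directly from the definition $I(\kappa,L)=\int_0^L f(t)e^{-\kappa t}\,dt$ and using only A1 and the bounds of Proposition~\ref{prop:bounds} where convenient.

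\textbf{Dependence on $\kappa$.} For fixed $L$ the integrand $f(t)e^{-\kappa t}$ is differentiable in $\kappa$ with derivative $-tf(t)e^{-\kappa t}$. On $[0,L]$ this is dominated by $Lf(t)$, which is integrable because $\int_0^L f=F(L)\le1$. Differentiation under the integral sign is therefore justified by dominated convergence, or by the mean value theorem applied to difference quotients, and gives the stated formula. For strict negativity we need $\int_0^L tf(t)e^{-\kappa t}dt>0$. Since $F(L)>0$ by A1 and $F(0)=0$, the measure $f\,dt$ puts positive mass on $(0,L]$. Moreover $t e^{-\kappa t}>0$ there, so the integral is strictly positive, provided $L>0$. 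For $L=0$ the gate sets $\CoD_q=0$ and there is nothing to prove.

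\textbf{Dependence on $L$.} By the fundamental theorem of calculus, $\partial I/\partial L=f(L)e^{-\kappa L}$ at every $L$ where $f$ is continuous. In general, $I$ is absolutely continuous in $L$ and the identity holds for almost every $L$. It is positive wherever $f(L)>0$, since $e^{-\kappa L}>0$. For the Weibull law $f$ is continuous and strictly positive on $(0,\infty)$, so the derivative is positive everywhere.

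\textbf{First-order dominance.} This is the step that needs care, because pointwise $F_1\ge F_2$ does not imply $f_1\ge f_2$. Here I would use Proposition~\ref{prop:ibp}:
\[
I_i=e^{-\kappa L}F_i(L)+\kappa\int_0^L F_i(t)e^{-\kappa t}\,dt .
\]
Both terms are monotone in $F_i$ with nonnegative weights $e^{-\kappa L}$ and $\kappa e^{-\kappa t}$, using $\kappa\ge0$. Hence $F_1\ge F_2$ gives $I_1\ge I_2$. The main obstacle is only to justify that Proposition~\ref{prop:ibp} applies to both laws, which it does under A1. The alternative would be a coupling argument with $e^{-\kappa t}$ decreasing, but that is more work.

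\textbf{Dependence on $r$ and $h$.} The last claim is immediate: $\CoD_q=r\,h\,I$ with $I\ge0$ not depending on $r$ or $h$. It is therefore linear in each, with slopes $hI\ge0$ and $rI\ge0$, strictly positive when the gates are open. Strictness additionally needs $h,r>0$ and $I>0$, and $I>0$ follows from the lower bound $e^{-\kappa L}F(L)>0$ in Proposition~\ref{prop:bounds}.
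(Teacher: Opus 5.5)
Your proof is correct and follows the route the paper sets up; the paper itself defers the proof to its repository. You use differentiation under the integral sign and the fundamental theorem of calculus for the two partial derivatives, the integration-by-parts form of Proposition~\ref{prop:ibp} (a nonnegatively weighted functional of $F$ when $\kappa\ge0$) for the dominance claim, and linearity of $rhI$ for the last part, with the right care that strictness needs $L>0$ and that $F_1\ge F_2$ does not give $f_1\ge f_2$.
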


\begin{proposition}[Gating and discounting]\label{prop:gating}
$\CoD_q=0$ if and only if $rhL=0$ or the key exchange is quantum-safe. Discounting at $\rho$ is identical to adding $\rho$ to the classical hazard: $\CoD_q(\lambda,\rho)=\CoD_q(\lambda+\rho,0)$.
\end{proposition}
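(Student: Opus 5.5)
The statement has two parts, and I would prove them separately, working directly from the Definition and the earlier propositions.

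For the gating part, I will treat the two directions separately. The "if" direction is immediate. If the key exchange is quantum-safe, the Definition sets $\CoD_q=0$ by fiat. If $h=0$ or $L=0$, then $\CoD_q=0$ either by the gate or because $I(\kappa,0)=0$. If $r=0$, then $\CoD_q=r\,h\,I=0$.

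For the "only if" direction, suppose the key exchange is not quantum-safe and $r,h,L>0$. I need $I(\kappa,L)>0$, and I will take it from the lower bound of Proposition~\ref{prop:bounds}: $I(\kappa,L)\ge e^{-\kappa L}F(L)$. Here $e^{-\kappa L}>0$ because $\kappa=\lambda+\rho$ is finite, and $F(L)>0$ by A1 since $L>0$. So $\CoD_q=rhI>0$, as required.

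The one place needing care is the reading of "$rhL=0$". With $r=\tau V$, this covers $\tau=0$ and $V=0$, and all quantities are nonnegative. So $rhL=0$ holds exactly when at least one of $r,h,L$ vanishes, which is the case split above.

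For the discounting part, I would first make explicit how $\rho$ enters. The Definition already writes the integral as $I(\kappa,L)$ with $\kappa=\lambda+\rho$, and $\lambda$ and $\rho$ appear nowhere else in $\CoD_q$. So $\CoD_q(\lambda,\rho)=r\,h\,I(\lambda+\rho,L)$, and setting $\rho=0$ with hazard $\lambda+\rho$ gives $\CoD_q(\lambda+\rho,0)=r\,h\,I(\lambda+\rho+0,L)$, which is the same expression.

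To make the claim substantive rather than notational, I would add the interpretation behind it. Survival from classical disclosure contributes the factor $e^{-\lambda t}$, as in Proposition~\ref{prop:race}. Present-value discounting of a loss at time $t$ contributes $e^{-\rho t}$. Their product $e^{-(\lambda+\rho)t}$ is exactly the survival factor of a classical hazard of rate $\lambda+\rho$, so the integrand $f(t)e^{-\kappa t}$ cannot distinguish the two.

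The main obstacle is only bookkeeping. I must check that the gates do not depend on $\lambda$ or $\rho$, so the equality also holds in the gated cases, where both sides are $0$. I must also confirm that $r$ and $h$ are independent of $\lambda$ and $\rho$; they are, since $r=\tau V$ and $h$ is given exogenously. No analytic difficulty arises beyond positivity of $F(L)$, which A1 supplies.
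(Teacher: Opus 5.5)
Your proposal is correct and follows essentially the same route the paper intends. The ``if'' direction comes from the gates together with $I(\kappa,0)=0$ and the factor $r$. The ``only if'' direction comes from the lower bound $I(\kappa,L)\ge e^{-\kappa L}F(L)>0$ of Proposition~\ref{prop:bounds}, with $F(L)>0$ from A1. The discounting identity follows because $\lambda$ and $\rho$ enter $\CoD_q$ only through $\kappa=\lambda+\rho$, and the gates depend on neither.
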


\begin{proposition}[Dominance threshold]\label{prop:dominance}
Let $V,\tau,h,L>0$ and $\Phi(\lambda)=\tau h\,I(\lambda+\rho,L)-\lambda$. Then $\CoD_q>\CoD_c\iff\Phi(\lambda)>0$, which does not depend on $V$. There is a unique $\lambda^*\in(0,\tau hF(L))$ with $\Phi(\lambda^*)=0$, and the quantum term dominates exactly when $\lambda<\lambda^*$. Necessary: $\lambda<\tau hF(L)$. Sufficient: $\lambda<\tau h\,e^{-(\lambda+\rho)L}F(L)$.
\end{proposition}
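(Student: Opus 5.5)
The plan is to divide the defining inequality by $V$, then apply the intermediate value theorem to $\Phi$ and show uniqueness through a monotonicity argument. First, $\CoD_q=\tau V h\,I(\lambda+\rho,L)$ and $\CoD_c=\lambda V$ with $V>0$, so $\CoD_q>\CoD_c$ is equivalent to $\tau h I(\lambda+\rho,L)>\lambda$, i.e.\ $\Phi(\lambda)>0$. This uses the identification $\kappa=\lambda+\rho$ from Proposition~\ref{prop:gating}, which folds discounting into the hazard. Since $V$ has been cancelled, the threshold cannot depend on it, in line with Proposition~\ref{prop:units}.

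Next I would establish existence. By Proposition~\ref{prop:statics}, $I(\kappa,L)$ is continuous (indeed differentiable) in $\kappa$, so $\Phi$ is continuous on $[0,\infty)$. At $\lambda=0$, $\Phi(0)=\tau h\,I(\rho,L)\ge\tau h\,e^{-\rho L}F(L)>0$ by Proposition~\ref{prop:bounds}, since $F(L)>0$ by A1 and $\tau,h,L>0$. At $\lambda_0=\tau hF(L)$, the upper bound $I\le F(L)$ of Proposition~\ref{prop:bounds} gives $\Phi(\lambda_0)\le0$. To locate the root strictly inside, I need strict inequality, namely $I(\lambda_0+\rho,L)<F(L)$. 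Proposition~\ref{prop:bounds} says the coupling factor equals $1$ only when $\kappa=0$, and here $\kappa\ge\lambda_0>0$, so $\Phi(\lambda_0)<0$. The intermediate value theorem then yields a root in $(0,\tau hF(L))$.

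Uniqueness is the step I expect to need the most care, though it is short. Proposition~\ref{prop:statics} gives $\partial I/\partial\kappa<0$, so $\Phi'(\lambda)=\tau h\,\partial_\kappa I-1<-1<0$. Thus $\Phi$ is strictly decreasing, which gives a unique zero $\lambda^*$, with $\Phi>0$ exactly on $[0,\lambda^*)$. The point to check is that $\partial_\kappa I=-\int_0^L t f(t)e^{-\kappa t}\,dt$ is legitimately strictly negative. That requires $f$ not to vanish a.e.\ on $(0,L)$, which follows from $F(L)>0$. In any case, even non-strict monotonicity of $I$ already makes $\Phi$ strictly decreasing because of the $-\lambda$ term, so uniqueness does not hinge on this.

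Finally, the two side conditions follow from the bounds. \emph{Necessary:} if $\Phi(\lambda)>0$, then $\lambda<\tau h I\le\tau hF(L)$. \emph{Sufficient:} if $\lambda<\tau h e^{-(\lambda+\rho)L}F(L)$, then the lower bound of Proposition~\ref{prop:bounds} gives $\lambda<\tau h I(\lambda+\rho,L)$, i.e.\ $\Phi(\lambda)>0$.
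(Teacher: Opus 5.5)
Your proposal is correct and follows the paper's own argument: divide by $V$, obtain $\Phi(0)>0$ and $\Phi(\tau hF(L))<0$ from Proposition~\ref{prop:bounds}, get uniqueness from $\Phi'<0$ via Proposition~\ref{prop:statics}, and read the necessary and sufficient conditions off the same bounds. You are more careful than the paper's sketch in two places: you justify the strict inequality at $\tau hF(L)$ through the ``equals $1$ only if $\kappa=0$'' clause, and you note that the $-\lambda$ term alone already makes $\Phi$ strictly decreasing.
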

\emph{Proof.} Divide by $V$. $\Phi(0)=\tau hI(\rho,L)>0$; by Proposition~\ref{prop:bounds}, $\Phi(\tau hF(L))<0$; by Proposition~\ref{prop:statics}, $\Phi'<0$. The bounds follow from Proposition~\ref{prop:bounds}. \hfill$\square$

The quantum-led class is therefore \{low classical hazard, long lifetime, quantum-vulnerable protection, recordable traffic, high turnover\}, with a single boundary per asset.

\begin{proposition}[Not a weighted sum]\label{prop:notsum}
$\partial^2\CoD/\partial\lambda\,\partial L=-rhLf(L)e^{-\kappa L}<0$ whenever $rhLf(L)>0$. Hence $\CoD$ is not of the form $A(V,\lambda)+B(L,h,r,F,\rho)$, and in particular not $w_1S_c+w_2S_q$ for a classical score $S_c$ and quantum score $S_q$ of their own inputs.
\end{proposition}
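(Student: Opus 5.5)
The plan is to differentiate the definition of $\CoD$ directly in $L$ first and then in $\lambda$. After that I would observe that any additively separable function has vanishing mixed partial in $(\lambda,L)$, so a strictly negative mixed partial at a single point rules out separability.

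\emph{Step 1 (derivative in $L$).} Write $\CoD=\lambda V+rh\,I(\lambda+\rho,L)$. Here $r=\tau V$ is treated as an input that does not depend on $\lambda$ or $L$. The classical leg $\lambda V$ does not involve $L$. By Proposition~\ref{prop:statics} (the fundamental theorem of calculus, using continuity of the Weibull density $f$ at $L>0$),
\[
\partial\CoD/\partial L=rh\,f(L)\,e^{-(\lambda+\rho)L}.
\]

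\emph{Step 2 (derivative in $\lambda$).} The expression from Step 1 is an explicit smooth function of $\lambda$, and $\partial\kappa/\partial\lambda=1$. Differentiating it gives
\[
\partial^2\CoD/\partial\lambda\,\partial L=-rhL\,f(L)\,e^{-\kappa L},
\]
which is strictly negative whenever $rhLf(L)>0$. For consistency I would also compute the other order: differentiate first in $\lambda$ to get $V-rh\int_0^L tf(t)e^{-\kappa t}dt$ (Proposition~\ref{prop:statics}), then in $L$ to get $-rhLf(L)e^{-\kappa L}$. Both orders agree, so the meaning of the mixed partial is unambiguous.

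\emph{Step 3 (non-separability).} Suppose $\CoD=A(V,\lambda)+B(L,h,r,F,\rho)$ for differentiable $A,B$. Then $\partial\CoD/\partial L=\partial B/\partial L$ does not depend on $\lambda$, so $\partial^2\CoD/\partial\lambda\,\partial L\equiv0$. This contradicts Step 2 at any point with $rhLf(L)>0$; such points exist, since under A1 the Weibull density is positive for $t>0$. A differentiability-free version works as well: separability forces the difference
\[
\CoD(\lambda_1,L_1)-\CoD(\lambda_1,L_2)-\CoD(\lambda_2,L_1)+\CoD(\lambda_2,L_2)
\]
to vanish, whereas integrating the negative mixed partial over the rectangle $[\lambda_1,\lambda_2]\times[L_1,L_2]$ gives a strictly negative value. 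The weighted sum $w_1S_c+w_2S_q$ is the special case $A=w_1S_c$, $B=w_2S_q$.

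The main obstacle is not the calculus but stating precisely what ``a quantum score of its own inputs'' means. $S_q$ must not depend on $\lambda$, because the coupling through $\kappa=\lambda+\rho$ is exactly what the proposition exploits. I would also note that the dependence of $r$ on $V$ is harmless, because $V$ does not involve $\lambda$ or $L$.
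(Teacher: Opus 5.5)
Your proof is correct and follows the paper's route. The paper differentiates in $\lambda$ first and then in $L$, which is your consistency check, and concludes that any additively separable form has zero mixed partial; your rectangle-difference version adds one thing the paper leaves implicit, namely that the argument does not need $A$ and $B$ to be differentiable.
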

\emph{Proof.} $\partial_\lambda\CoD=V-rh\int_0^Ltf e^{-\kappa t}dt$; differentiate in $L$. Any additive separable form has zero mixed partial. \hfill$\square$

This shows that the coupling is real; it does not show that the coupled form is more accurate. A competing-risks term is textbook \citep{prentice1978}, and the claim concerns its application.

\begin{proposition}[Classical leg is not materially double counted]\label{prop:overcount}
$\Pr(T_q<T_c\mid T_c\le1)\le F(1)$, which is $2.4\%$ (optimistic) and $0.4\%$ (pessimistic).
\end{proposition}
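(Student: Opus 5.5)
We need to bound $\Pr(T_q<T_c\mid T_c\le 1)$. The plan is to write this conditional probability as an integral over the law of $T_c$ conditioned on $\{T_c\le 1\}$, using the independence assumption A3. That law has density $\lambda e^{-\lambda s}/(1-e^{-\lambda s}\big|_{s=1})$ on $[0,1]$, that is, density $\lambda e^{-\lambda s}/(1-e^{-\lambda})$. Conditioning on the value of $T_c$ and applying A3 gives
\[
\Pr(T_q<T_c\mid T_c\le1)=\frac{1}{1-e^{-\lambda}}\int_0^1 \lambda e^{-\lambda s}F(s)\,ds .
\]
This is the same computation as the one for $\Pr(E_c)$ in Proposition~\ref{prop:race}, with $F$ in place of $1-F$ and $L=1$. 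We take $\lambda>0$ so that the conditioning event has positive probability.

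Next, $F$ is nondecreasing by A1, so $F(s)\le F(1)$ for every $s\in[0,1]$. Bounding the integrand this way gives $\int_0^1\lambda e^{-\lambda s}F(s)\,ds\le F(1)(1-e^{-\lambda})$. The normalising factor then cancels, which yields $\Pr(T_q<T_c\mid T_c\le1)\le F(1)$.

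A shorter route to the same inequality uses inclusion of events. On $\{T_c\le1\}$ we have $\{T_q<T_c\}\subseteq\{T_q\le1\}$. By A3, the event $\{T_q\le 1\}$ is independent of $\{T_c\le1\}$, so its conditional probability is simply $F(1)$. I would present this version as the main proof and keep the integral formula as a remark.

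The numerical claims come from evaluating $F(1)=1-\exp\{-(1/\alpha)^\beta\}$ at the two fitted parameter pairs. For $(\alpha,\beta)=(13.18,1.433)$, we have $(1/13.18)^{1.433}=e^{-1.433\ln 13.18}$ with $\ln 13.18\approx2.579$. The exponent is therefore about $-3.695$, giving $e^{-3.695}\approx0.0249$ and $F(1)\approx2.4\%$. For $(17.90,1.912)$, we have $\ln17.9\approx2.885$, so the exponent is about $-5.516$ and $F(1)\approx0.40\%$.

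There is no real obstacle here. The only points needing care are to state that $\lambda>0$ is required for the conditioning event to be non-null, and to check the rounding of the two numerical values.
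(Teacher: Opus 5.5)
Your proof is correct and takes essentially the same route as the paper's (one-line) argument: on $\{T_c\le1\}$ the event $\{T_q<T_c\}$ forces $T_q\le1$, and by A3 $\Pr(T_q\le1\mid T_c\le1)=F(1)$, with $\lambda>0$ so that the conditioning event is non-null. One small arithmetic slip: $e^{-3.695}\approx0.0248$ is $(1/\alpha)^\beta$, not $F(1)$; you still need the step $F(1)=1-e^{-0.0248}\approx0.0245$, which the paper reports as $2.4\%$.
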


\begin{proposition}[Fitting the arrival law]\label{prop:fit}
For anchors $(t_1,p_1),(t_2,p_2)$ with $0<t_1<t_2$ and $0<p_1<p_2<1$ there is exactly one Weibull $(\alpha,\beta)$ through them: $\beta=\ln(y_2/y_1)/\ln(t_2/t_1)$, $\alpha=t_1y_1^{-1/\beta}$ with $y_i=-\ln(1-p_i)$. Two distinct fitted laws cross at most once, here at $t^*\approx44.7$ years where both CDFs exceed $0.997$.
\end{proposition}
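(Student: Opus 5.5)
The plan is to linearise the Weibull CDF with a double-log transform. Write $F(t)=1-\exp\{-(t/\alpha)^\beta\}$, so $-\ln(1-F(t))=(t/\alpha)^\beta$. Taking logarithms once more gives
\[
\ln\bigl(-\ln(1-F(t))\bigr)=\beta\ln t-\beta\ln\alpha .
\]
The two anchor conditions $F(t_i)=p_i$ therefore become two linear equations in the unknowns $(\beta,\ \beta\ln\alpha)$:
\[
\ln y_i=\beta\ln t_i-\beta\ln\alpha,\qquad y_i=-\ln(1-p_i),\quad i=1,2 .
\]
The map $p\mapsto y=-\ln(1-p)$ is a strictly increasing bijection from $(0,1)$ onto $(0,\infty)$. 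Hence $0<p_1<p_2<1$ gives $0<y_1<y_2$, and $\ln y_i$ is well defined. Subtracting the two equations gives $\beta=\ln(y_2/y_1)/\ln(t_2/t_1)$. Its numerator and denominator are both strictly positive because $y_2>y_1$ and $t_2>t_1$, so $\beta>0$ is admissible and uniquely determined. Back-substituting into the $i=1$ equation gives $(t_1/\alpha)^\beta=y_1$, that is, $\alpha=t_1y_1^{-1/\beta}>0$, again unique. The determinant of the $2\times2$ system is $\ln t_2-\ln t_1\neq0$, which gives existence and uniqueness in $(\beta,\ln\alpha)$ at once. Finally, every $(\alpha,\beta)\in(0,\infty)^2$ corresponds to exactly one pair, so there is exactly one Weibull law through the anchors.

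For the crossing claim, take two fitted laws $(\alpha_1,\beta_1)$ and $(\alpha_2,\beta_2)$. Since $u\mapsto\ln(-\ln(1-u))$ is strictly increasing on $(0,1)$, $F_1(t)=F_2(t)$ for some $t>0$ holds if and only if
\[
(\beta_1-\beta_2)\ln t=\beta_1\ln\alpha_1-\beta_2\ln\alpha_2 .
\]
This equation is affine in $s=\ln t\in\mathbb{R}$. There are two cases:
\begin{itemize}
\item If $\beta_1\neq\beta_2$, it has exactly one root, $t^*=\exp\{(\beta_1\ln\alpha_1-\beta_2\ln\alpha_2)/(\beta_1-\beta_2)\}$.
\item If $\beta_1=\beta_2$, then $\alpha_1\neq\alpha_2$ (the laws are distinct), and there is no root.
\end{itemize}
In either case the laws cross at most once.

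The numerical part then reduces to substitution. Inserting the two fitted scenarios ($\alpha=13.18,\beta=1.433$ and $\alpha=17.90,\beta=1.912$) into the formula for $t^*$ should give about $44.7$ years. Evaluating either CDF there, for instance $1-\exp\{-(44.7/17.90)^{1.912}\}$, should give a value above $0.997$.

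The only step I expect to need care is this numerical check. The quoted $\alpha,\beta$ are rounded, and $t^*$ is sensitive to the difference $\beta_1-\beta_2$ in the denominator. I would therefore recompute $(\alpha,\beta)$ from the raw anchors $(10,0.49),(15,0.70)$ and $(10,0.28),(15,0.51)$ using the closed form before evaluating $t^*$, rather than relying on the rounded values. The algebraic parts are routine once the double-log transform is in place.
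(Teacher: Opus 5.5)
\textbf{The numerical check fails.} Your algebra is correct, and the double-log transform is the natural route. The anchor conditions are linear in $(\beta,\beta\ln\alpha)$. You once write $(\beta,\ln\alpha)$, which is harmless because $\beta>0$. The sign argument gives $\beta>0$, and the case split $\beta_1\neq\beta_2$ versus $\beta_1=\beta_2,\ \alpha_1\neq\alpha_2$ proves ``at most one crossing'' for any two distinct Weibull laws.

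The gap is the step you left as ``should give a value above $0.997$''. Carried out, it gives a value below $0.997$.

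From the raw anchors, the values of $\ln y_1,\ln y_2$ are
\begin{itemize}
\item optimistic: $-0.3955,\ 0.1856$;
\item pessimistic: $-1.1132,\ -0.3378$.
\end{itemize}
Hence $\beta=1.4332$ and $1.9124$. The two lines in $(\ln t,\ln y)$ meet at $\ln t^*\approx3.8003$, so $t^*\approx44.72$. The rounded parameters $(13.18,1.433)$, $(17.90,1.912)$ give $t^*\approx44.73$, so your worry about rounding in $\beta_1-\beta_2$ is misplaced.

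At the crossing, $\ln y^*=\beta(\ln t^*-\ln\alpha)\approx1.751$, so $y^*\approx5.761$. Then $F_1(t^*)=F_2(t^*)=1-e^{-5.761}\approx0.99685$, with $1-F\approx3.15\times10^{-3}$. Since $-\ln(0.003)\approx5.809>5.761$, the common CDF value at $t^*$ is strictly \emph{below} $0.997$.

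So the final clause of the statement holds only after rounding to three decimals. Your computation establishes $t^*\approx44.7$ and $F(t^*)\approx0.9969$ (for example, $>0.996$), not $>0.997$. State that weaker bound, or flag the discrepancy with the stated $0.997$, rather than asserting the inequality as the expected outcome of a computation you have not done.
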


\begin{proposition}[Smith's rule]\label{prop:smith}
A single team does items $j=1..n$ back to back; item $j$ takes $p_j$ years and costs a constant $w_j$ per year until finished. Total loss $\sum w_jC_j$ is minimised by non-increasing $w_j/p_j$; with equal durations, by highest cost of delay first.
\end{proposition}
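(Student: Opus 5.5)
The plan is the classical adjacent-interchange argument of Smith. Fix any schedule and suppose two items $i$ and $j$ are consecutive, with $i$ immediately before $j$, and $w_i/p_i<w_j/p_j$. Write $S$ for the time at which $i$ starts.

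\textbf{Step 1 (compare the two orders).} Swapping $i$ and $j$ leaves the completion time of every other item unchanged. Before the swap, items before the pair finish at the same times, and items after the pair start at $S+p_i+p_j$ in both orders. So only the contribution of the pair changes. Before the swap it is $w_i(S+p_i)+w_j(S+p_i+p_j)$; after the swap it is $w_j(S+p_j)+w_i(S+p_j+p_i)$. The difference (before minus after) is $w_jp_i-w_ip_j$. This is strictly positive exactly when $w_j/p_j>w_i/p_i$, so the swap strictly reduces $\sum_k w_kC_k$.

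\textbf{Step 2 (reach an optimum).} There are finitely many permutations, so a minimiser of $\sum_k w_kC_k$ exists. By Step~1 an optimal schedule can contain no adjacent pair in increasing ratio order; hence its ratios $w_k/p_k$ are non-increasing. Conversely, every schedule with non-increasing ratios has the same total loss. Two such schedules differ only by permutations within blocks of equal ratio. Any such permutation is a product of adjacent swaps of equal-ratio items, and each of these changes the objective by $w_jp_i-w_ip_j=0$. Therefore every non-increasing-ratio schedule attains the minimum. The only delicate point is this handling of ties, and the zero-change swaps settle it.

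\textbf{Step 3 (equal durations).} If $p_j\equiv p$, then ordering by $w_j/p$ is the same as ordering by $w_j$. Interpreting $w_j$ as the constant $\CoD$ rate of item $j$ (the loss accrued per year of delay) gives highest cost of delay first. I expect no real obstacle. The only modelling point is that the constant-rate assumption is what makes the loss incurred by item $j$ exactly $w_jC_j$. That is why the statement is restricted to constant loss rates, so the result does not extend to the time-varying $\CoD_q$ without further argument.
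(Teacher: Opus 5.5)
Your proposal is correct and follows the same adjacent-interchange argument as the paper, whose one-line proof (citing Smith) uses the same swap difference $w_jp_i-w_ip_j$. Your Step~2 also makes explicit two points the paper leaves implicit in ``sorting by $w/p$ never increases it'': that an optimum exists, and that equal-ratio ties do not change the loss.
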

\emph{Proof.} Adjacent interchange: swapping adjacent $i,j$ changes the loss by $w_jp_i-w_ip_j$, so sorting by $w/p$ never increases it \citep{smith1956}. \hfill$\square$

\begin{proposition}[Exact price of a sub-optimal order]\label{prop:price}
With $\rho_j=w_j/p_j$, any order exceeds Smith's loss by exactly $\sum p_ip_j|\rho_i-\rho_j|$ over pairs inverted relative to Smith's order. Two sequential queues (all classical fixes, then all post-quantum migrations) are optimal if and only if their $\rho$-ranges do not interleave.
\end{proposition}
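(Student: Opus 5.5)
We prove the identity by counting the cost of each inverted pair, reusing the adjacent-interchange computation from Proposition~\ref{prop:smith}. Write $\pi$ for an arbitrary order and $\sigma$ for Smith's order (non-increasing $\rho_j$). For any order, $C_j=\sum_{i\preceq j}p_i$, so
\[
\sum_j w_jC_j=\sum_j w_jp_j+\sum_{\{i,j\}}\bigl(\text{$w_jp_i$ if $i$ precedes $j$, else $w_ip_j$}\bigr).
\]
The diagonal term is the same for every order. The total therefore decomposes into independent pairwise contributions, and each pair contributes according to its relative order only.

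For a pair with $\rho_i\ge\rho_j$, Smith's order places $i$ first and pays $w_jp_i$. Inverting the pair pays $w_ip_j$ instead, an excess of
\[
w_ip_j-w_jp_i=p_ip_j(\rho_i-\rho_j)=p_ip_j|\rho_i-\rho_j|.
\]
Summing over the pairs that $\pi$ inverts relative to $\sigma$ gives the stated excess exactly. No induction on adjacent swaps is needed, because the pairwise decomposition is exact.

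Ties in $\rho$ need some care. Smith's order is then not unique, but tied pairs contribute $0$ regardless of orientation. The formula is therefore well defined whichever Smith order is used as reference.

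For the two-queue claim, let $Q$ be the order with all classical items before all post-quantum items, each queue internally sorted by $\rho$. The proof has two steps.

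\begin{enumerate}
\item Sorting within each queue can only lower the loss, by Proposition~\ref{prop:smith}. The only pairs that can remain inverted are cross pairs $(c,q)$ with $c$ classical, $q$ post-quantum and $\rho_q>\rho_c$.
\item By the formula, $Q$ is optimal if and only if every such pair has $p_cp_q(\rho_q-\rho_c)=0$. Since durations are positive, this holds if and only if $\rho_q\le\rho_c$ for all cross pairs, i.e.\ $\max\rho_{\text{PQ}}\le\min\rho_{\text{classical}}$.
\end{enumerate}

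The main obstacle is that this characterisation is asymmetric: the classical-first order needs the classical range to lie above the post-quantum range. The statement's ``ranges do not interleave'' should be read as saying that some order of the two queues, classical-first or post-quantum-first, is optimal, and the argument above applies to that order.

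Ties at the boundary, where the ranges touch at a single value, also need handling. I would take ``do not interleave'' to allow touching endpoints, because tied cross pairs contribute $0$ to the excess.
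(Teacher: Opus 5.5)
Your proof is correct. It takes a slightly more direct route than the one the paper points to.

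In the text the paper gives only the adjacent-interchange computation of Proposition~\ref{prop:smith} (swapping adjacent $i,j$ changes the loss by $w_jp_i-w_ip_j$) and leaves this proof to the repository. The natural continuation of that argument is to bubble-sort an arbitrary order into Smith's order. Each swap of an adjacent inverted pair lowers the loss by $p_ip_j|\rho_i-\rho_j|$, and every inverted pair is swapped exactly once. You instead split the objective into $\sum_j w_jp_j$ plus one term per unordered pair that depends only on that pair's relative order. This gives the identity in one line and shows directly that the excess depends only on the \emph{set} of inverted pairs, not on any sequence of swaps. The interchange route reuses Proposition~\ref{prop:smith} verbatim. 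It needs extra bookkeeping, though: that an adjacent swap changes exactly one pairwise term, and that bubble sort removes each inversion once. That bookkeeping is your decomposition in another guise.

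On the two-queue clause your reading is the right one. Taken literally (classical fixes first), non-interleaving is not sufficient. If the post-quantum range lies strictly above the classical one, every cross pair is inverted and classical-first is strictly sub-optimal. The exact condition for classical-first is $\max\rho_{\mathrm{PQ}}\le\min\rho_{\mathrm{classical}}$. ``Do not interleave'' characterises when \emph{some} two-queue order is optimal, which matches the paper comparing both queue orders in Table~\ref{tab:sched}. You were also right to state the tacit assumption that each queue is internally in Smith order; without it the clause fails even for non-interleaving ranges.
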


Propositions~\ref{prop:smith} and~\ref{prop:price} assume constant rates. The quantum rate actually rises with time, so Smith's order is a first-order rule; Section~\ref{sec:sched} evaluates the exact time-varying accrual and finds the true optimum by dynamic programming.

\begin{figure}[ht]
\centering
\includegraphics[width=0.95\linewidth]{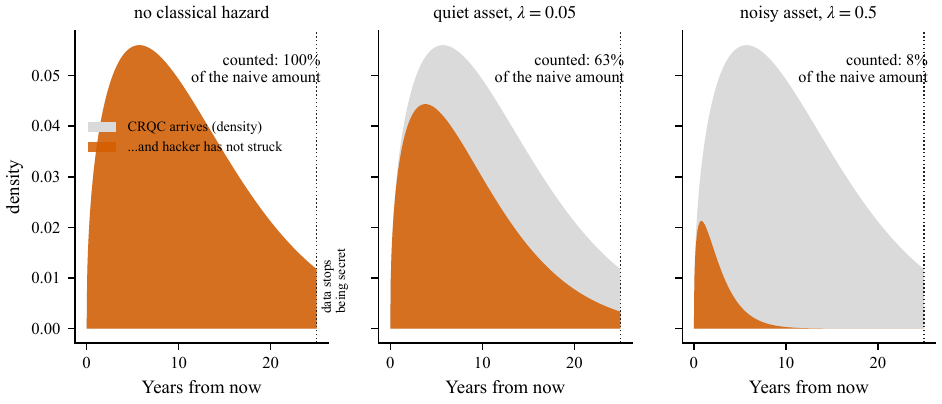}
\caption{The race between a classical attacker and a quantum computer for the same data (Proposition~\ref{prop:race}).}\label{fig:race}
\end{figure}

\begin{figure}[ht]
\centering
\includegraphics[width=0.95\linewidth]{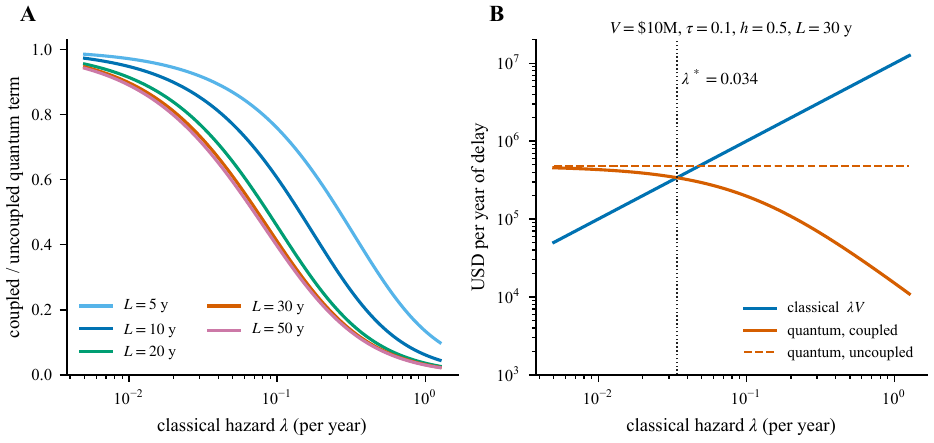}
\caption{The coupling factor, and the dominance threshold $\lambda^*$ below which the quantum term exceeds the classical term (Proposition~\ref{prop:dominance}).}\label{fig:dom}
\end{figure}

\section{Parameters: where the numbers come from}\label{sec:params}

All inputs are ordinal bands with fixed values; every case-study row records a rationale and a provenance. Table~\ref{tab:bands} gives the scales and their anchors. Only the \emph{relative} loss across assets affects the ranking (Proposition~\ref{prop:units}), so the absolute calibration matters for the stated magnitudes only.

\begin{table}[ht]
\centering\small
\begin{tabular}{p{2.6cm}p{3.6cm}p{7.6cm}}
\toprule
Input & Band values & Anchor and strength of evidence\\
\midrule
Loss $V$ & \$0, 10k, 100k, 1M, 10M, 100M & IBM/Ponemon 2025: global mean breach \$4.44M, about \$160 per record \citep{ibm2025}. Spacing is a judgement; ranking depends only on relative values.\\
Hazard $\lambda$ & 0.01, 0.05, 0.2, 0.5, 1.0 / yr & Ponemon 2018: 27.9\% recurrence within two years (about 0.16 / yr for an organisation) \citep{ponemon2018}. Per-store bands are judgement.\\
Lifetime $L$ & years, from a rule or fact & FATF R.11: at least five years \citep{fatf2012}; medical-retention rules (decades); biometrics capped at 50 years. The strongest input.\\
Recordable $h$ & 0, 0.1, 0.5, 0.9, 1.0 & Reasoning about network position; recording is cheap, access is the limit \citep{blanco2026}. Weak.\\
Turnover $\tau$ & per year, with a consistency rule & No published source for these systems. Rule: $V_{\text{flow}}\approx\tau_{\text{store}}V_{\text{store}}$. Weakest.\\
Protection class & observed or documented & NIST FIPS 203 and the behaviour of Shor's algorithm \citep{shor1997,fips203}.\\
\bottomrule
\end{tabular}
\caption{Input scales and their anchors. Full derivations, ranges and the revision history are in \texttt{methodology/PARAMETERS.md}.}\label{tab:bands}
\end{table}

\paragraph{A correction found by the consistency rule.} In an earlier draft two flow assets of the identity platform (registration packets and internal service traffic) were given stock-scale loss. Applying the turnover consistency rule showed this overstated them roughly tenfold; they were corrected to the HIGH band, all results were regenerated, and the headline asset (registration packets) remained first but the order became sensitive to the discount rate. This is reported because it shows how much the conclusions depend on a modelling choice that nobody publishes.

\section{Verification}\label{sec:verification}

Three independent layers support the implementation; none establishes that the model is empirically right.

\begin{enumerate}
\item \textbf{Mathematical.} Each proposition is proved above or in the repository, and checked by a test on random continuous inputs, with symbolic (SymPy) checks of the derivative and integration-by-parts identities.
\item \textbf{Numerical.} A Monte Carlo simulation of the two competing clocks agrees with the closed form on 46 asset-by-scenario cases within three standard errors; the integration-by-parts quadrature agrees with adaptive quadrature to a worst relative error of $4.5\times10^{-10}$.
\item \textbf{Second implementation.} A JavaScript port, used by the interactive explorer that accompanies this paper, reproduces 577 Python values to $10^{-7}$ relative.
\end{enumerate}

In addition, 256 automated tests run, including refutation baselines (ranking by data lifetime alone, by classical cost alone, by Mosca's binary rule). Reference entries were checked against Crossref and arXiv by script; 41 of 41 machine-verifiable entries resolved, and the remainder are reports, standards and patents checked by hand. Two guessed DOIs were wrong and were caught and corrected by that script. This verification discipline follows the author's earlier work on validated evaluation harnesses \citep{shaw2026d}.

\begin{figure}[ht]
\centering
\includegraphics[width=0.95\linewidth]{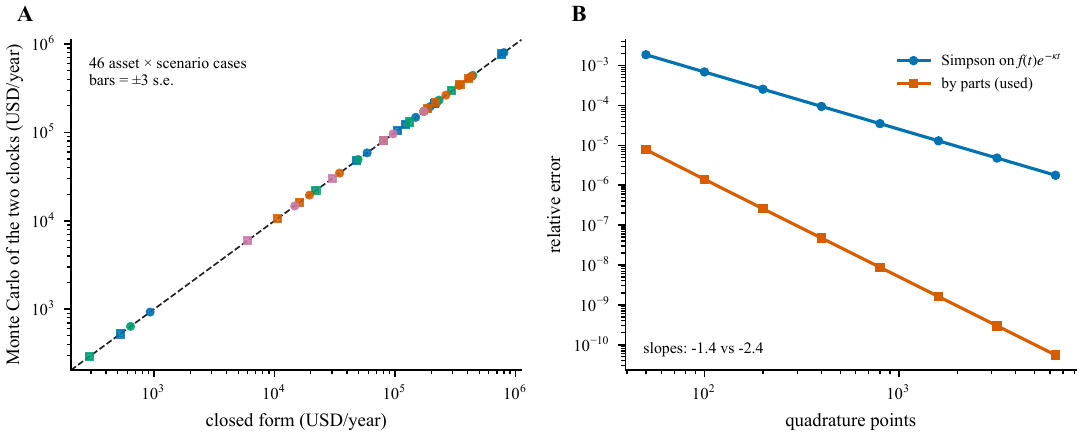}
\caption{Numerical validation: Monte Carlo against the closed form, and quadrature error of the two integration forms.}\label{fig:val}
\end{figure}

\section{Case studies}\label{sec:cases}

\subsection{What these are}
Four public systems: MOSIP (national identity), Apache Fineract (core banking), OpenMRS (medical records) and Google's Online Boutique (a demonstration web shop). All parameters are assumed from public documentation; \emph{no organisation supplied data}. Each asset is a category of data that the system's design contains; the rows, rationales and provenance are in \texttt{case\_studies/}. These are illustrations of what the method produces, not evidence that its inputs are right. Online Boutique is a mock system, included to test the procedure.

\begin{table}[ht]
\centering\small\setlength{\tabcolsep}{4pt}
\begin{tabular}{lrrrrcc}
\toprule
System & Assets & Median $L$ & Pairs differing & Quantum-led & Stable (CRQC) & Stable ($\rho$)\\
\midrule
MOSIP & 9 & 50 & 5 of 36 & 4 of 9 & yes & no\\
Apache Fineract & 9 & 10 & 0 of 36 & 1 of 9 & yes & yes\\
OpenMRS & 10 & 20 & 6 of 45 & 1 of 10 & no & no\\
Online Boutique & 7 & 4 & 2 of 21 & 0 of 7 & yes & yes\\
\bottomrule
\end{tabular}
\caption{Cross-system summary (optimistic arrival scenario). ``Pairs differing'' counts pairs of assets ordered differently from a classical-only ranking.}\label{tab:cross}
\end{table}

\begin{figure}[ht]
\centering
\includegraphics[width=0.95\linewidth]{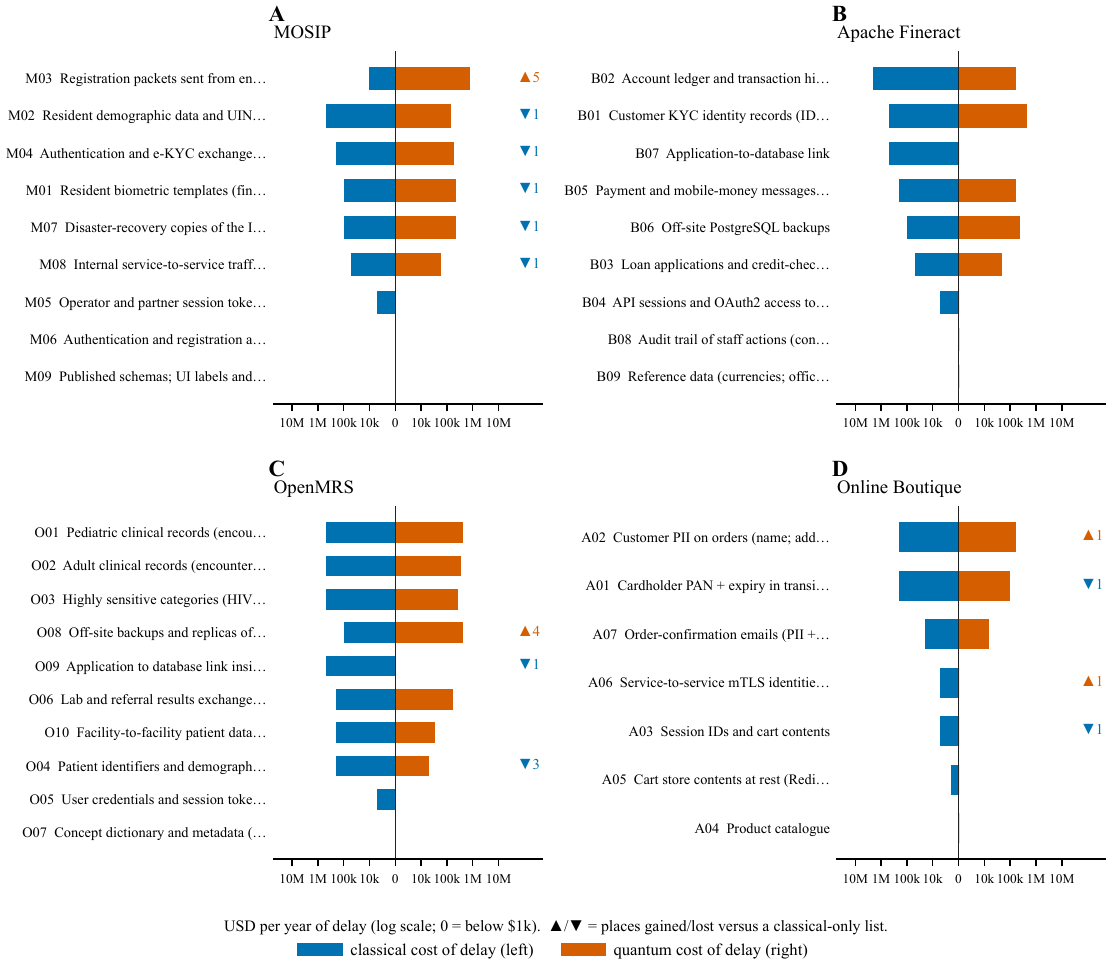}
\caption{Classical and quantum cost of delay per asset in each system (log scale).}\label{fig:cod}
\end{figure}

\paragraph{MOSIP.} Registration packets sent from enrolment centres rank first against sixth under a classical-only ranking: classical cost \$10.0k per year, quantum cost \$800.7k per year, because the packets are assumed to use an RSA-based key wrap, cross a network, are new every year and must stay secret for 50 years (the cap beyond which the arrival curve is wholly extrapolated). Four of nine assets are quantum-led. The order is invariant across arrival scenarios but changes with the discount rate: from $\rho\ge5\%$ the demographic store overtakes the packets.

\paragraph{Apache Fineract.} Records are kept for 5--15 years. The quantum term raises magnitudes (off-site backups: \$232.3k per year quantum against \$100.0k classical) but changes no pairwise decision: a negative result for medium-lifetime financial data.

\paragraph{OpenMRS.} Records of minors are kept for decades. Off-site backups, rarely attacked directly (classical \$100.0k per year) but replicated over partly exposed paths for 30 years, have a quantum cost of \$431.6k per year, 4.3 times their classical cost, and move from eighth to fourth place. The application-to-database link, with no quantum-vulnerable key exchange, has a quantum cost of exactly zero.

\paragraph{Online Boutique.} No asset is quantum-led. It is included only to test the procedure.

\begin{figure}[ht]
\centering
\includegraphics[width=0.95\linewidth]{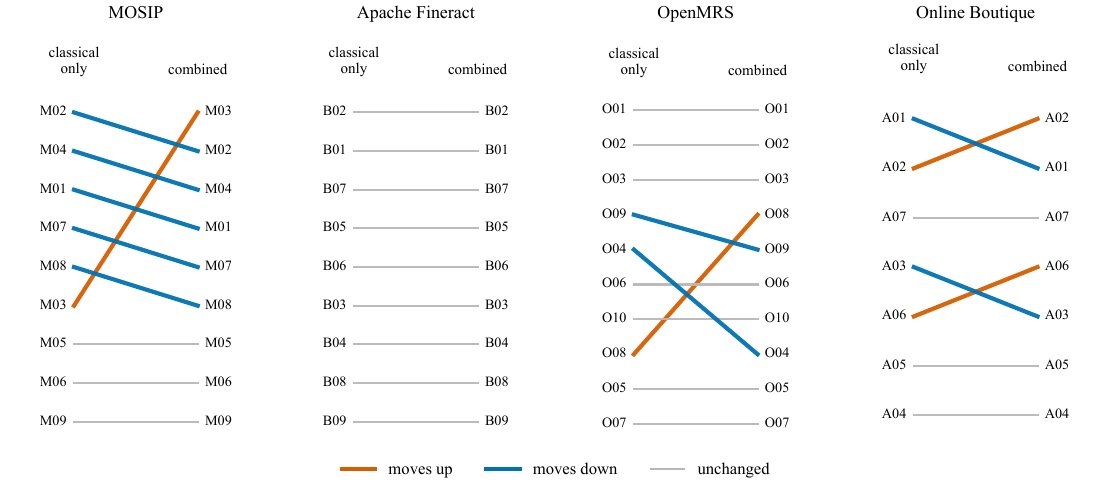}
\caption{Rank of each asset under a classical-only list against the unified list.}\label{fig:shift}
\end{figure}

\subsection{Where the ordering changes}
On OpenMRS the ordering is not invariant. Off-site backups outrank the application-to-database link if the CRQC median is before about 2039 or if $\rho<0.7\%$; otherwise the link outranks them. This boundary, not a single ordering, is the decision-relevant uncertainty (Figure~\ref{fig:sens}). A point-score method would hide it.

\begin{figure}[ht]
\centering
\includegraphics[width=0.95\linewidth]{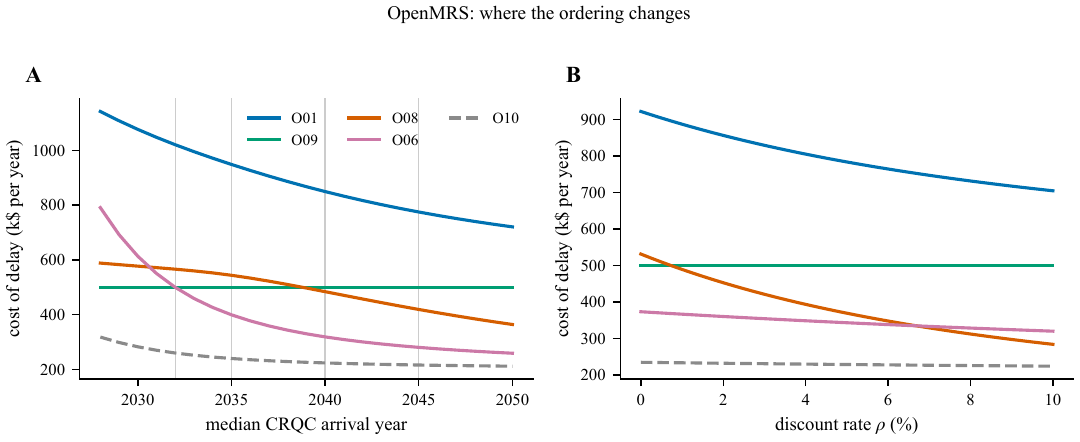}
\caption{Ordering of assets as the CRQC median year and the discount rate vary.}\label{fig:sens}
\end{figure}

\section{Statistical analysis}\label{sec:stats}

Every input is an ordinal band, so the central device is input-uncertainty propagation: each input of each asset is re-drawn from a stated range (value $\pm$ half a band; hazard $\times/\div2$; lifetime $\times/\div1.5$; recordable share $\pm0.1$; turnover $\times/\div3$; CRQC scenario optimistic or pessimistic with equal probability; discount uniform on $[0,7\%]$), the whole register is re-scored, and we count how often each conclusion survives. The ranges are themselves assumptions. All procedures use fixed seeds. Posited assets are excluded.

\subsection{Which conclusions survive}
Over 4,000 perturbed registers per system (Wilson intervals \citep{wilson1927}), the probability that the funded top-$k$ set differs from the classical-only set ($k=\mathrm{round}(0.25n)$, an arbitrary planning cut) is 71\% for MOSIP, 45\% for OpenMRS, 26\% for Fineract and 2\% for Online Boutique. Table~\ref{tab:surv} gives the key asset of each system.

\begin{table}[ht]
\centering\small
\begin{tabular}{p{4.6cm}rrrr}
\toprule
Asset & $P$(quantum-led) & Top-$k$, unified & Top-$k$, classical & $P$(up $\ge2$)\\
\midrule
MOSIP, registration packets & 100\% & 55\% & 0\% & 91\%\\
OpenMRS, off-site backups & 87\% & 19\% & 1\% & 56\%\\
Fineract, off-site backups & 64\% & 8\% & 2\% & 14\%\\
Boutique, customer PII & 26\% & 99\% & 99\% & 0\%\\
\bottomrule
\end{tabular}
\caption{Per-asset survival of conclusions over 4,000 perturbed registers.}\label{tab:surv}
\end{table}

\begin{figure}[ht]
\centering
\includegraphics[width=0.95\linewidth]{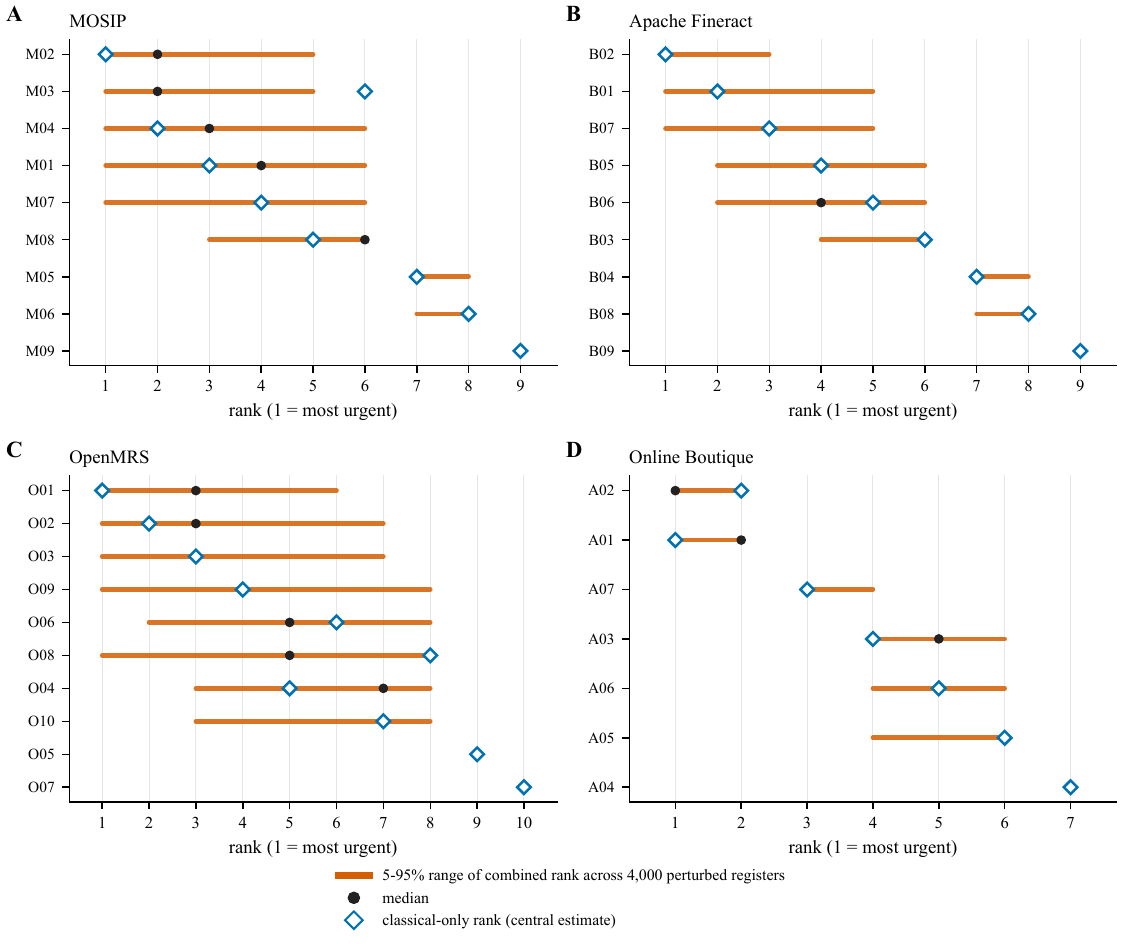}
\caption{Interval of ranks (5th to 95th percentile) for each asset across perturbed registers.}\label{fig:rankunc}
\end{figure}

\subsection{Does the quantum term reorder the list more than input noise does?}
The quantum term reorders some pairs, but the list also churns because two analysts would assign slightly different bands. On the same perturbed registers we compare \emph{signal} (pairs ordered differently by the combined and classical-only rankings) with \emph{noise} (pairs ordered differently between the classical-only rankings of two independent perturbations). The ratio exceeds one only for MOSIP: 1.36 (95\% interval 1.33--1.38), against 0.46 (Fineract), 0.41 (OpenMRS) and 0.14 (Online Boutique). For three of four systems the quantum term moves the list \emph{less} than analyst disagreement does. The claim that the quantum term reshuffles priorities is therefore not supported in general. What is supported is narrower: it changes magnitudes for long-lived assets and moves specific assets decisively, without reordering the whole list beyond noise for medium-lifetime systems.

\begin{figure}[ht]
\centering
\includegraphics[width=0.8\linewidth]{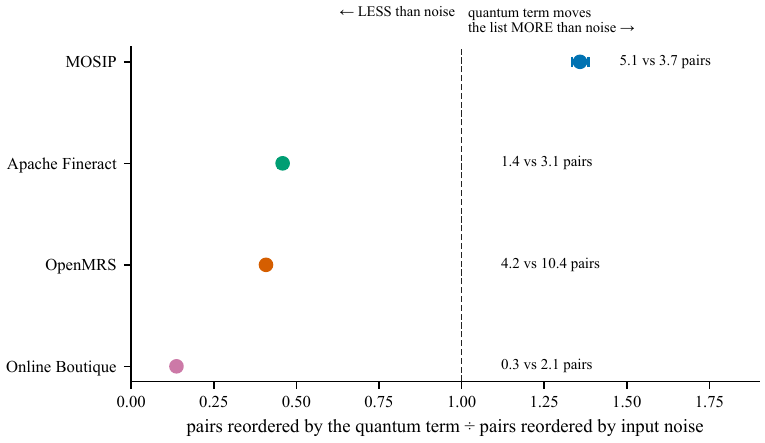}
\caption{Signal-to-noise ratio of the quantum reordering against input uncertainty, with 95\% bootstrap intervals.}\label{fig:snr}
\end{figure}

\subsection{What drives the answer}
We estimate Sobol' indices \citep{sobol2001,saltelli2010} with Saltelli sampling, Jansen estimators \citep{jansen1999} and a scrambled Sobol' sequence ($N=4096$, $36{,}864$ model runs per asset, 300 bootstrap resamples) on $\log_{10}\CoD_q$. First-order shares, and total-order within $0.02$ of them everywhere, show that the model is nearly additive in log space. Loss $V$ and turnover $\tau$ together explain 91\%, 79\%, 89\% and 73\% of the variance for the key assets of MOSIP, Fineract, OpenMRS and Online Boutique; the CRQC median year explains 3--17\%, the discount 1--5\%, and lifetime, recordable share and hazard under 10\% each. For assets with 30--50-year lifetimes the arrival date barely matters because $F(L)$ has saturated near one. Because $V$ and $\tau$ enter multiplicatively and their assumed ranges are similar, they share the variance by construction; the shares depend on the ranges chosen.

\begin{figure}[ht]
\centering
\includegraphics[width=0.9\linewidth]{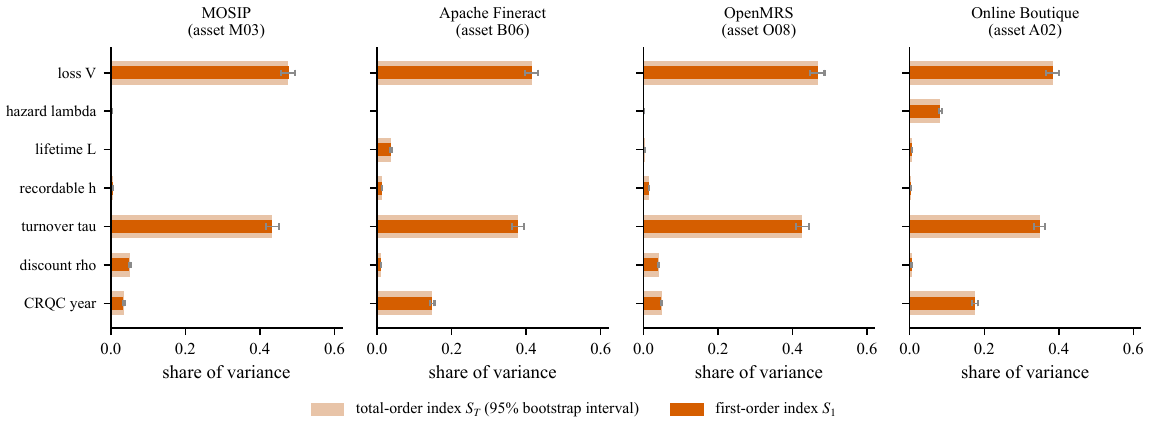}
\caption{First-order Sobol' indices for the quantum cost of delay of the key asset of each system.}\label{fig:sobol}
\end{figure}

\subsection{Lifetime and quantum share}
Across all 35 assets the Spearman correlation between lifetime and quantum share is $\rho_s=0.71$ (cluster-bootstrap 95\% interval $[0.47,0.89]$ over four systems; within-system permutation $p=10^{-4}$). This association is largely built into the model (Propositions~\ref{prop:statics} and~\ref{prop:gating}); it confirms that the implementation behaves as the theory predicts and that the assets were assigned lifetimes that expose the effect. It is not an empirical discovery about real systems.

\subsection{Monte Carlo convergence}
For OpenMRS off-site backups the probability of being quantum-led settles at $0.878$ with a Wilson half-width of $0.005$ at 16,000 draws ($0.010$ at 4,000), falling as $1/\sqrt n$.

\section{Does one list beat two? Scheduling}\label{sec:sched}

We build a backlog with one classical and one post-quantum item per asset, each taking one team-year; the quantum item accrues loss at its exact time-varying rate. We compare one list ranked by cost of delay (Smith's order), two queues (classical first or quantum first) and the exact optimum by dynamic programming over subsets.

\begin{table}[ht]
\centering\small\setlength{\tabcolsep}{4pt}
\begin{tabular}{lrrrrrr}
\toprule
System & Items & One list & Class.\ first & Quant.\ first & Optimum & Ratio\\
\midrule
MOSIP & 16 & 10.81 & 23.31 & 14.00 & 10.29 & 1.05\\
Apache Fineract & 15 & 16.93 & 25.93 & 32.75 & 15.97 & 1.06\\
OpenMRS & 17 & 30.75 & 40.00 & 36.84 & 28.96 & 1.06\\
Online Boutique & 11 & 2.49 & 5.67 & 3.43 & 2.27 & 1.10\\
\bottomrule
\end{tabular}
\caption{Committed loss (USD millions). One list ranked by cost of delay is within 5--10\% of the exact optimum; sequential queues cost 27--150\% more.}\label{tab:sched}
\end{table}

This is a \emph{model-internal} result: it says the ranking is optimal for the model's own loss definition, with assumed unit durations and a single team. It does not show that real organisations accrue loss this way.

\section{Comparison with QARS}

We implemented QARS from its published equations (equal weights, the paper's linear timeline surrogate); the implementation reproduces the paper's Table~2 values (0.733 and 0.383). This is our reading of the published model, not the authors' code. Kendall's $\tau$ \citep{kendall1938} between QARS and cost of delay is 0.83 (MOSIP), 0.67 (Fineract), 0.64 (OpenMRS) and 0.71 (Online Boutique). Three differences follow from the equations, each pinned by a unit test: QARS's gate zeroes only the exposure term, so an asset already on ML-KEM can still score 0.67; QARS has no classical-hazard input, so assets differing only in hazard score identically; and QARS maps value to a four-label scale, so a tenfold change inside a band leaves its score unchanged. This shows cost of delay behaves more sensibly on cases that QARS's equations handle poorly; it does not show cost of delay is more accurate. QARS was designed as a quantum-only score, and neither has been validated against outcomes.

\section{Threats to validity}

\begin{enumerate}
\item \textbf{Parameters are assumed, not elicited.} The most important limitation. Two practitioners may disagree and inter-rater agreement is unmeasured; one analyst, assisted by a language model, populated every register.
\item \textbf{Turnover and recordable share are the weakest inputs}, and turnover sets the size of the quantum-led class; no elicitation protocol was validated.
\item \textbf{Independence and constant hazard} (A2, A3) are assumptions. A state actor with both capabilities violates independence.
\item \textbf{Confidentiality only.} Systems whose dominant risks are integrity or availability will score low everywhere.
\item \textbf{Grover's algorithm is not modelled}: adequate for AES-256, not for AES-128. (The QARS sequel \citep{krusniauskas2026} does model a Grover-bounded attenuation.)
\item \textbf{The arrival law is an interpolation} of two published points, extrapolated beyond 15 years where most long-lived assets sit.
\item \textbf{Optimality is model-internal}, and Smith's rule is exact only for constant rates and one team.
\item \textbf{No outcome validation.} No dataset links these scores to realised harm.
\item \textbf{Perturbation ranges are assumptions}, and the signal-to-noise ratio and Sobol' shares depend on them.
\item \textbf{Prior-art coverage is incomplete}: we did not run a systematic review, and two closely related 2026 papers \citep{krusniauskas2026,ishchukova2026} were read at abstract level only.
\item \textbf{The systems are public software, not deployments.} We assessed design documentation, not any operator's configuration.
\end{enumerate}

\section{Discussion}

The contribution is narrower than a new threat-modelling methodology. It is a way to place two kinds of confidentiality risk in one unit, with a coupling that prevents double counting, a dominance condition that says which assets the quantum term can change, and an explicit report of where the answer depends on assumptions nobody can resolve today. In the case studies a planner would find that the quantum term matters decisively for a particular class (long-lived, quiet, recordable, quantum-vulnerable), moves most lists less than analyst disagreement would, and changes the order for the medical-record system only under particular arrival dates or discount rates.

\paragraph{Future work and collaboration.} The next step is to populate the same register format with an organisation's own parameters and measure inter-rater agreement; the register schema, the elicitation questions and the CLI are in the repository for that purpose. The author would welcome contact from operators of any of the systems studied, or of comparable systems, who wish to test the method on real inventories. Other extensions include a per-adversary exposure in place of the single recordable share, a non-constant classical hazard, and linking the inputs to automated cryptographic inventories such as those produced by code-analysis tools \citep{shaw2026a,cyclonedxcbom}.

\section{Conclusion}

We expressed classical and harvest-now-decrypt-later risk as a cost of delay in currency per year on the same asset, coupled through a competing-risks term. The construction has twelve proved propositions, is verified numerically and by a second implementation, and is applied to four public systems under assumed parameters. The result is conditional: the quantum term changes the order beyond input noise only where data must stay secret for decades, it moves particular assets decisively, and the arrival date matters far less than how much value flows and how much of it is new each year.

\section*{Declarations}

\paragraph{Use of AI tools.} A large language model (Anthropic's Claude, accessed through the Claude Code tool) assisted with writing code, tests, figures, the interactive explorer and the drafting of this manuscript, under the author's direction. The language model is not an author and takes no responsibility for the work. The author set the research question, chose the scope, decided which claims to retract after prior-art checks, and is responsible for the content. All numerical results are generated by code in the repository; every reference was checked against its source by script or by hand, and the repository README records how the tool was used.

\paragraph{Data and code availability.} The code, the four asset registers and the interactive explorer are available at \url{https://github.com/AnimeshShaw/pqc-cost-of-delay}; a live version of the explorer is at \url{https://animeshshaw.github.io/pqc-cost-of-delay/}. A permanent archive of release v0.1.0 is deposited at Zenodo (\url{https://doi.org/10.5281/zenodo.23133468}). Code and case-study data are released under the Apache License 2.0.

\paragraph{Funding.} None. \quad \textbf{Competing interests.} The author declares no competing interests.

\paragraph{Ethics.} The study uses only public documentation of open-source systems. No personal data and no human participants are involved.

\bibliographystyle{unsrtnat}
\bibliography{references/references}

\begin{thebibliography}{40}
\providecommand{\natexlab}[1]{#1}
\providecommand{\url}[1]{\texttt{#1}}
\expandafter\ifx\csname urlstyle\endcsname\relax
  \providecommand{\doi}[1]{doi: #1}\else
  \providecommand{\doi}{doi: \begingroup \urlstyle{rm}\Url}\fi

\bibitem[Mosca(2018)]{mosca2018}
Michele Mosca.
\newblock {Cybersecurity in an Era with Quantum Computers: Will We Be Ready?}
\newblock \emph{IEEE Security \& Privacy}, 16\penalty0 (5):\penalty0 38--41,
  2018.
\newblock \doi{10.1109/MSP.2018.3761723}.

\bibitem[{National Institute of Standards and
  Technology}(2024{\natexlab{a}})]{fips203}
{National Institute of Standards and Technology}.
\newblock {Module-Lattice-Based Key-Encapsulation Mechanism Standard}.
\newblock Technical Report FIPS 203, NIST, 2024{\natexlab{a}}.

\bibitem[{National Institute of Standards and
  Technology}(2024{\natexlab{b}})]{nistir8547}
{National Institute of Standards and Technology}.
\newblock {Transition to Post-Quantum Cryptography Standards}.
\newblock Technical Report NIST IR 8547 (initial public draft), NIST,
  2024{\natexlab{b}}.
\newblock Initial public draft, 12 November 2024: quantum-vulnerable public-key
  algorithms proposed to be deprecated after 2030 and disallowed after 2035;
  secondary.

\bibitem[Baradziej(2026)]{baradziej2026}
Simon Baradziej.
\newblock {The cost of waiting: a decision-theoretic synthesis of early versus
  late post-quantum migration under uncertainty}.
\newblock \emph{Frontiers in Quantum Science and Technology}, 5, 2026.
\newblock \doi{10.3389/frqst.2026.1918786}.

\bibitem[Scholten et~al.(2024)Scholten, Williams, Moody, Mosca, Hurley, Zeng,
  Troyer, and Gambetta]{scholten2024}
Travis~L. Scholten, Carl~J. Williams, Dustin Moody, Michele Mosca, William
  Hurley, William~J. Zeng, Matthias Troyer, and Jay~M. Gambetta.
\newblock {Assessing the Benefits and Risks of Quantum Computers}, 2024.
\newblock URL \url{https://arxiv.org/abs/2401.16317}.

\bibitem[Gidney and Ekerå(2021)]{gidney2021}
Craig Gidney and Martin Ekerå.
\newblock {How to factor 2048 bit RSA integers in 8 hours using 20 million
  noisy qubits}.
\newblock \emph{Quantum}, 5:\penalty0 433, 2021.
\newblock \doi{10.22331/q-2021-04-15-433}.

\bibitem[Gidney(2025)]{gidney2025}
Craig Gidney.
\newblock {How to factor 2048 bit RSA integers with less than a million noisy
  qubits}, 2025.
\newblock URL \url{https://arxiv.org/abs/2505.15917}.

\bibitem[Blanco-Romero et~al.(2026)Blanco-Romero, Mendoza, Rubio, Campo, and
  Sánchez]{blanco2026}
Javier Blanco-Romero, Florina~Almenares Mendoza, Carlos~García Rubio, Celeste
  Campo, and Daniel~Díaz Sánchez.
\newblock {On the Practical Feasibility of Harvest-Now, Decrypt-Later Attacks},
  2026.
\newblock URL \url{https://arxiv.org/abs/2603.01091}.

\bibitem[Šarūnas Grigaliūnas and Brūzgienė(2025)]{grigaliunas2025}
Šarūnas Grigaliūnas and Rasa Brūzgienė.
\newblock {Towards a Unified Quantum Risk Assessment}.
\newblock \emph{Electronics}, 14\penalty0 (17):\penalty0 3338, 2025.
\newblock \doi{10.3390/electronics14173338}.

\bibitem[Kru{\v s}niauskas et~al.(2026)Kru{\v s}niauskas, Grigali{\=u}nas,
  Br{\=u}zgien{\.e}, and Cayir]{krusniauskas2026}
Simas Kru{\v s}niauskas, {\v S}ar{\=u}nas Grigali{\=u}nas, Rasa
  Br{\=u}zgien{\.e}, and Mert Cayir.
\newblock {A Quantum-Adjusted Risk Model for Enterprise Infrastructure Across
  Data In Transit, In Use, and At Rest}.
\newblock \emph{Electronics}, 15\penalty0 (12):\penalty0 2546, 2026.
\newblock \doi{10.3390/electronics15122546}.

\bibitem[Ishchukova et~al.(2026)Ishchukova, Sajadi, Petrenko, Petrenko, and
  Nekrasov]{ishchukova2026}
Evgeniya Ishchukova, Faezeh~Sadat Sajadi, Sergei Petrenko, Alexey Petrenko, and
  Alexey Nekrasov.
\newblock {A Quantum Risk Index for Cryptographic CVEs: Empirical Evidence from
  the National Vulnerability Database, 2016--2026}.
\newblock \emph{Computers}, 15\penalty0 (8):\penalty0 495, 2026.
\newblock \doi{10.3390/computers15080495}.

\bibitem[Ma et~al.(2021)Ma, Colon, Dera, Rashidi, and Garg]{ma2021}
Chujiao Ma, Luis Colon, Joe Dera, Bahman Rashidi, and Vaibhav Garg.
\newblock {CARAF: Crypto Agility Risk Assessment Framework}.
\newblock \emph{Journal of Cybersecurity}, 7\penalty0 (1), 2021.
\newblock \doi{10.1093/cybsec/tyab013}.

\bibitem[Erlemann et~al.(2026)Erlemann, Morris, and Sathe]{erlemann2026}
Rasmus Erlemann, Charles~Colyer Morris, and Sanjyot Sathe.
\newblock {Full-Stack Knowledge Graph and LLM Framework for Post-Quantum Cyber
  Readiness}, 2026.
\newblock URL \url{https://arxiv.org/abs/2601.03504}.

\bibitem[Weinberg(2025)]{weinberg2025}
Abraham~Itzhak Weinberg.
\newblock {Preparing for the Post Quantum Era: Quantum Ready Architecture for
  Security and Risk Management (QUASAR) -- A Strategic Framework for
  Cybersecurity}, 2025.
\newblock URL \url{https://arxiv.org/abs/2505.17034}.

\bibitem[Baseri et~al.(2025{\natexlab{a}})Baseri, Chouhan, Ghorbani, and
  Chow]{baseri2024}
Yaser Baseri, Vikas Chouhan, Ali Ghorbani, and Aaron Chow.
\newblock {Evaluation framework for quantum security risk assessment: A
  comprehensive strategy for quantum-safe transition}.
\newblock \emph{Computers \& Security}, 150:\penalty0 104272,
  2025{\natexlab{a}}.
\newblock \doi{10.1016/j.cose.2024.104272}.

\bibitem[Baseri et~al.(2025{\natexlab{b}})Baseri, Hafid, and
  Lashkari]{baseri2025}
Yaser Baseri, Abdelhakim Hafid, and Arash~Habibi Lashkari.
\newblock {Future-Proofing Cloud Security Against Quantum Attacks: Risk,
  Transition, and Mitigation Strategies}, 2025{\natexlab{b}}.
\newblock URL \url{https://arxiv.org/abs/2509.15653}.

\bibitem[Miles et~al.(2026)Miles, Wood, and Ramos]{us12519809}
Brian~M. Miles, Elizabeth Wood, and Anthony Ramos.
\newblock {Scalable assessment and prioritization of network asset cryptography
  for quantum risk mitigation}.
\newblock US Patent 12,519,809 B2, AT\&T Intellectual Property I LP, 2026.
\newblock Granted 6 January 2026; claim 1 read via Google Patents; secondary.

\bibitem[McNutt and dos Santos(2024)]{us12010210}
Robert McNutt and Daniel~Ricardo dos Santos.
\newblock {Determining cryptographic quantum-safety for network assets}.
\newblock US Patent 12,010,210 B1, Forescout Technologies Inc., 2024.
\newblock Granted 11 June 2024; claim 1 read via Google Patents; secondary.

\bibitem[Freund and Jones(2015)]{freund2015}
Jack Freund and Jack Jones.
\newblock \emph{{Measuring and Managing Information Risk: A FAIR Approach}}.
\newblock Butterworth-Heinemann, 2015.
\newblock Year from the publisher's chapter records (Elsevier, 2015);
  secondary.

\bibitem[Hubbard and Seiersen(2016)]{hubbard2016}
Douglas~W. Hubbard and Richard Seiersen.
\newblock \emph{{How to Measure Anything in Cybersecurity Risk}}.
\newblock Wiley, 2016.
\newblock \doi{10.1002/9781119162315}.

\bibitem[Gordon and Loeb(2002)]{gordon2002}
Lawrence~A. Gordon and Martin~P. Loeb.
\newblock {The economics of information security investment}.
\newblock \emph{ACM Transactions on Information and System Security},
  5\penalty0 (4):\penalty0 438--457, 2002.
\newblock \doi{10.1145/581271.581274}.

\bibitem[Reinertsen(2009)]{reinertsen2009}
Donald~G. Reinertsen.
\newblock \emph{{The Principles of Product Development Flow: Second Generation
  Lean Product Development}}.
\newblock Celeritas Publishing, 2009.
\newblock Publisher and year confirmed from catalogue records; secondary.

\bibitem[Smith(1956)]{smith1956}
Wayne~E. Smith.
\newblock {Various optimizers for single‐stage production}.
\newblock \emph{Naval Research Logistics Quarterly}, 3\penalty0 (1-2):\penalty0
  59--66, 1956.
\newblock \doi{10.1002/nav.3800030106}.

\bibitem[Pinedo(2012)]{pinedo2012}
Michael~L. Pinedo.
\newblock \emph{{Scheduling: Theory, Algorithms, and Systems}}.
\newblock Springer, 4 edition, 2012.
\newblock \doi{10.1007/978-1-4614-2361-4}.

\bibitem[Prentice et~al.(1978)Prentice, Kalbfleisch, Peterson, Flournoy,
  Farewell, and Breslow]{prentice1978}
R.~L. Prentice, J.~D. Kalbfleisch, A.~V. Peterson, N.~Flournoy, V.~T. Farewell,
  and N.~E. Breslow.
\newblock {The Analysis of Failure Times in the Presence of Competing Risks}.
\newblock \emph{Biometrics}, 34\penalty0 (4):\penalty0 541, 1978.
\newblock \doi{10.2307/2530374}.

\bibitem[Shaw(2026{\natexlab{a}})]{shaw2026a}
Animesh Shaw.
\newblock {Quantum-Safe Code Auditing: LLM-Assisted Static Analysis and
  Quantum-Aware Risk Scoring for Post-Quantum Cryptography Migration},
  2026{\natexlab{a}}.
\newblock URL \url{https://arxiv.org/abs/2604.00560}.

\bibitem[Shaw(2026{\natexlab{b}})]{shaw2026b}
Animesh Shaw.
\newblock {quantum-safe: Bridging the Post-Quantum Production Gap with a
  Hybrid-by-Default Python Cryptography Library}, 2026{\natexlab{b}}.
\newblock URL \url{https://arxiv.org/abs/2605.17061}.

\bibitem[Shaw(2026{\natexlab{c}})]{shaw2026c}
Animesh Shaw.
\newblock {Compared to What? A Human-Anchored Security Benchmark for
  LLM-Generated Infrastructure-as-Code}, 2026{\natexlab{c}}.
\newblock URL \url{https://arxiv.org/abs/2608.28021}.

\bibitem[Shaw(2026{\natexlab{d}})]{shaw2026d}
Animesh Shaw.
\newblock {Silent Failures in Agentic Security Evaluation: A Validated Harness
  for Tool-Call Mediation Under Indirect Prompt Injection}, 2026{\natexlab{d}}.
\newblock URL \url{https://arxiv.org/abs/2609.32691}.

\bibitem[Mosca and Piani(2026)]{moscapiani2025}
Michele Mosca and Marco Piani.
\newblock {Quantum Threat Timeline Report 2025}.
\newblock Technical report, evolutionQ Inc. and Global Risk Institute, 2026.
\newblock URL
  \url{https://globalriskinstitute.org/publication/quantum-threat-timeline-report-2025b/}.
\newblock Published 9 March 2026; 26 experts surveyed. Figures confirmed from
  the GRI landing page and a second secondary source; full report not read.

\bibitem[{IBM Security} and {Ponemon Institute}(2025)]{ibm2025}
{IBM Security} and {Ponemon Institute}.
\newblock {Cost of a Data Breach Report 2025: The AI Oversight Gap}.
\newblock Technical report, IBM, 2025.
\newblock Read directly (pp. 1--15): global average USD 4.44M; healthcare
  7.42M; financial 5.56M; United States 10.22M; India 2.51M; USD 160 per
  customer-PII record; 600 organisations; 2,960--113,620 records.

\bibitem[{Ponemon Institute} and {IBM Security}(2018)]{ponemon2018}
{Ponemon Institute} and {IBM Security}.
\newblock {2018 Cost of Data Breach Study: Global Overview}.
\newblock Technical report, IBM, 2018.
\newblock URL \url{https://www.ibm.com/support/pages/node/6213647}.
\newblock The 27.9\% two-year recurrence likelihood and the 477-company sample
  are quoted from IBM's published summary page; secondary. Companies sampled
  had already experienced a breach.

\bibitem[{Financial Action Task Force}(2012)]{fatf2012}
{Financial Action Task Force}.
\newblock {International Standards on Combating Money Laundering and the
  Financing of Terrorism and Proliferation: The FATF Recommendations}.
\newblock Technical report, FATF, 2012.
\newblock As updated; Recommendation 11 (record-keeping) quoted from the
  published text found by search; secondary.

\bibitem[Shor(1997)]{shor1997}
Peter~W. Shor.
\newblock {Polynomial-Time Algorithms for Prime Factorization and Discrete
  Logarithms on a Quantum Computer}.
\newblock \emph{SIAM Journal on Computing}, 26\penalty0 (5):\penalty0
  1484--1509, 1997.
\newblock \doi{10.1137/S0097539795293172}.

\bibitem[Wilson(1927)]{wilson1927}
Edwin~B. Wilson.
\newblock {Probable Inference, the Law of Succession, and Statistical
  Inference}.
\newblock \emph{Journal of the American Statistical Association}, 22\penalty0
  (158):\penalty0 209--212, 1927.
\newblock \doi{10.1080/01621459.1927.10502953}.

\bibitem[Sobol'(2001)]{sobol2001}
I.M Sobol'.
\newblock {Global sensitivity indices for nonlinear mathematical models and
  their Monte Carlo estimates}.
\newblock \emph{Mathematics and Computers in Simulation}, 55\penalty0
  (1-3):\penalty0 271--280, 2001.
\newblock \doi{10.1016/S0378-4754(00)00270-6}.

\bibitem[Saltelli et~al.(2010)Saltelli, Annoni, Azzini, Campolongo, Ratto, and
  Tarantola]{saltelli2010}
Andrea Saltelli, Paola Annoni, Ivano Azzini, Francesca Campolongo, Marco Ratto,
  and Stefano Tarantola.
\newblock {Variance based sensitivity analysis of model output. Design and
  estimator for the total sensitivity index}.
\newblock \emph{Computer Physics Communications}, 181\penalty0 (2):\penalty0
  259--270, 2010.
\newblock \doi{10.1016/j.cpc.2009.09.018}.

\bibitem[Jansen(1999)]{jansen1999}
Michiel~J.W. Jansen.
\newblock {Analysis of variance designs for model output}.
\newblock \emph{Computer Physics Communications}, 117\penalty0 (1-2):\penalty0
  35--43, 1999.
\newblock \doi{10.1016/S0010-4655(98)00154-4}.

\bibitem[KENDALL(1938)]{kendall1938}
M.~G. KENDALL.
\newblock {A NEW MEASURE OF RANK CORRELATION}.
\newblock \emph{Biometrika}, 30\penalty0 (1-2):\penalty0 81--93, 1938.
\newblock \doi{10.1093/biomet/30.1-2.81}.

\bibitem[{OWASP CycloneDX}(2026)]{cyclonedxcbom}
{OWASP CycloneDX}.
\newblock {Cryptography Bill of Materials (CBOM)}, 2026.
\newblock URL \url{https://cyclonedx.org/capabilities/cbom/}.
\newblock Accessed 2026-10-02.

\end{thebibliography}

\end{document}